\newcommand{\al}{\alpha}
\newcommand{\bet}{\beta}
\newcommand{\e}{\epsilon}
\newcommand{\h}{\eta}
\newcommand{\bo}{\mathbf}
\newcommand{\Lb}{\underline{L}}
\newcommand{\demi}{\frac{1}{2}}
\newcommand{\ls}{\leqslant}
\newcommand{\rs}{\geqslant}
\newcommand{\hs}{\hspace{5mm}}
\newcommand{\abs}[1]{\lvert#1\rvert}
\newcommand{\norm}[1]{\lVert#1\rVert}
\newcommand{\normf}[1]{\left\|#1\right\|}
\newcommand{\Dep}{\mathcal{D}^+(\Omega)}
\newcommand{\sstrike}[1]{\ensuremath{\not\!\text{#1}}\xspace}
\newcommand{\nablab}{\sstrike{$\nabla$}}
\newtheorem{thm}{Theorem}[subsection]
\newtheorem{lem}{Lemma}[subsection]
\newtheorem{defi}{Definition}[subsection]
\newtheorem{cor}{Corollary}[subsection]
\newtheorem{conj}{Conjecture}[subsection]
\begin{document}

\title{Local extendability of Einstein vacuum manifolds}

\author{David Parlongue}
\date{\today}
\maketitle
\begin{abstract} We revisit in this article results of Klainerman and Rodnianski on a geometric breakdown criterion for Einstein vacuum spacetimes. We take advantage of the use of a time-harmonic transversal gauge to give a localized version (in space and time) of this result.

\end{abstract}
\thispagestyle{empty}

\section{Introduction}

\hspace{5mm} The Cauchy problem for Einstein vacuum equations has been largely studied in the last fifty years since the first results on local well-posedness to the non-linear stability of the Minkowski spacetime and beyond low-regularity solutions, relations between general relativity and geometrization ... The question of formation of singularities is however quite open and remains one of the main challenge of classical general relativity. 

  From a mathematical point of view a geometric breakdown criterion involving only control on the $L^\infty$ norm of the second fundamental form and the horizontal derivative of the lapse has been proved for Einstein vacuum equations (EVE) in the constant mean curvature gauge (CMC) as well as in the asymptotically flat case (see \cite{krb}). These results have been extended to criteria with weaker conditions and also in the direction of Einstein equations coupled with a scalar field or Maxwell equations. 
  
  These results are however global in space and do not give information on local extendability of vacuum spacetimes. The goal of this article is to localize in space these criteria and apply these local theorems to prove a result on formation of singularities and local extendability properties of vacuum spacetimes. We first review some continuation results for EVE.
\section{Global breakdown criteria}

\hspace{5mm} We consider in this paper a $3+1$ dimensional Lorentzian manifold $\mathbf{(M,g)}$ satisfying the EVE :

\begin{equation}
\begin{aligned}
\mathbf{R_{\al \bet}(g)=0}
\end{aligned}
\end{equation}

We suppose that a part of the space-time denoted by  $\mathcal{M} \subset \mathbf{M}$ is globally hyperbolic with respect to a spatial hypersurface $\Sigma_0$ and foliated by the level hypersurfaces of a regular time function denoted by $t$. We suppose moreover that this time function is monotonically increasing towards future, with lapse function $n$ and second fundamental form $k$. We will use the following sign convention  :
\begin{eqnarray}
&& k(X,Y)=-\bf{g}(D_{X}T,Y) ,\\
&& n=(-\bf{g}(\bf{D}t,\bf{D}t))^{-1/2}
\end{eqnarray}

Throughout this paper $\bf{T}$, will denote the future unit normal to the leaf $\Sigma_{t}$ of the time foliation, $^{\bf (T)}{\pi}$ its deformation tensor and $\bf{D}$ the covariant derivative associated with  $\bf{g}$. Let $\Sigma_{0}$ be a slice of the time foliation which will be referred as the initial slice. The EVE can be seen as an initial value problem once an initial data set is given. An initial data set consists in a 3 dimensional Riemanian hypersurface $\Sigma_{0}$ together with a 3 dimensional Riemannian metric $g$ and a second fundamental form $k$. 

$(\Sigma_0,g,k)$ has moreover to satisfy the following constraint equations which take the following form in the maximal gauge regime :
\begin{eqnarray}
&& tr k = 0,\\
&& \nabla^{j}k_{i,j}=0,\\
&&
 R=|k|^2
\end{eqnarray}
where here $R$ stands for the Riemann curvature of the metric $g$ and $\nabla$ the covariant derivative associated to $g$.
The other equations coming from the reformulation of EVE are the following evolution equations :
\begin{eqnarray}
&&\partial_t g_{ij}=-2nk_{ij},\\
&&\partial_t k_{ij}=-\nabla_i\nabla_j n + n(R_{ij}-2k_{ia}k^{a}_j)
\end{eqnarray}

together with the lapse equation :

\begin{equation}
\begin{aligned}
\Delta n = |k|^2 n
\end{aligned}
\end{equation}
with condition $n \rightarrow 1$ at spatial infinity on $\Sigma$.

The continuation result of  \cite{krb}  holds true for maximal foliations as well as for constant mean curvature foliations (CMC). In this case the leaves are compact and the mean curvature can be taken as being the time function. We will thus consider here the case of maximal foliation, the one with CMC foliation being simpler due to the compacity of the leaves. 

 $\Sigma_0$ is said to be asymptotically flat if the complement of a compact set $K \subset \Sigma_0$ is diffeomorphic to the complement of a $3$-sphere and that there exists a system of coordinate in which :

\begin{eqnarray}
g_{ij}&=&(1-\frac{M}{r})\delta_{ij} +o_4(r^{-3/2}) \\
k_{ij}&=& o_3(r^{-5/2})
\end{eqnarray}

\vspace{3mm}

{\bf [Geometric assumption \bf (G)]} The initial surface $\Sigma_0$ is said to satisfy {\bf (G)} if and only if there exists a converging of $\Sigma_0$ by a finite number of charts $U$ such that for any fixed chart, the induced metric $g$ verifies :

$$C^{-1} \abs{\xi}^2 \ls g_{ij}(x)\xi_i \xi_j \ls C\abs{\xi}^2 , \hs \forall x \in U$$

with $C$ a fixed number.
\vspace{3mm}
 
The system of equations (4) to (9) is a determined system of equation. We are naturally led to ask the question of well-posedness for this system. The first result of this theory is the local well-posedness result of Choquet-Bruhat  \cite{cb}, here as stated in \cite{ck}:

\begin{thm} (Local existence theorem) Let $(\Sigma_0,g_0,k_0)$ be an initial data set verifying the following conditions :

1) $(\Sigma_0,g_0)$ is a complete Riemannian manifold diffeormorphic to $\mathbb{R}^3$.

2)The ispoperimetric constant of $(\Sigma_0,g_0)$ is finite.

3) $Ric(g_0) \in {H}_{2,1}(\Sigma_0,g_0)$ and $k_0 \in {H}_{3,1}(\Sigma_0,g_0)$.
 
4) The initial slice satisfies the constraint equations.

Then there exists a unique, local-in-time smooth development, foliated by a normal, maximal time foliation $t$ with range in some $\left[0,T\right]$ and with $t=0$ corresponding to the initial slice $\Sigma$. Moreover :
 
 a) $g(t)-g \in \mathcal{C}^1([0,T],H_{4,1})$
 
 b) $k(t)\in \mathcal{C}^0([0,T],H_{3,1})$
 
 c) $Ric(g)(t)\in \mathcal{C}^0([0,T],{H}_{2,1})$
 \end{thm}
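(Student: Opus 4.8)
The plan is to construct the development in two stages: first solve the vacuum equations in a hyperbolic gauge to obtain a smooth local development, and then build the maximal time foliation on it. The essential obstruction is that $\mathbf{R}_{\alpha\beta}(\mathbf{g})=0$ is not hyperbolic as written, because of the diffeomorphism invariance of the equations. To break this gauge freedom I would introduce wave (harmonic) coordinates $x^\alpha$ satisfying $\Box_{\mathbf{g}} x^\alpha = 0$, in which the Ricci tensor takes the form $\mathbf{R}_{\alpha\beta} = -\tfrac{1}{2}\mathbf{g}^{\mu\nu}\partial_\mu\partial_\nu \mathbf{g}_{\alpha\beta} + N_{\alpha\beta}(\mathbf{g},\partial\mathbf{g})$, with $N$ quadratic in the first derivatives of the metric. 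The vacuum equations then reduce to a diagonal system of quasilinear wave equations for the components $\mathbf{g}_{\alpha\beta}$.

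For this reduced system I would build the Cauchy data from $(\Sigma_0,g_0,k_0)$: the spatial metric and its normal derivative are read off from $g_0$ and $k_0$, while the wave-gauge condition on $\Sigma_0$ fixes the remaining components and their time derivatives. Local existence and uniqueness then follow from the standard theory of quasilinear wave equations, via a Picard-type iteration closed by energy estimates. Conditions (1)--(3), together with the finiteness of the isoperimetric constant, are precisely what guarantee that the data lie in the appropriate weighted Sobolev spaces $H_{k,\delta}$ and that the energy estimates close at the stated level of regularity, yielding a lifespan $[0,T]$ and the initial regularity of $\mathbf{g}$. Geometric uniqueness of the development up to isometry is recovered by placing any two developments in wave coordinates and invoking uniqueness for the reduced system on a common neighbourhood.

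The next step is to verify that a solution of the \emph{reduced} system actually solves the full vacuum equations. Here I would use the contracted second Bianchi identity: the failure of the wave-coordinate condition, encoded in $\Gamma^\alpha = \Box_{\mathbf{g}} x^\alpha$, is shown to satisfy a homogeneous linear wave equation. Since the constraint equations in condition (4) force $\Gamma^\alpha$ and its normal derivative to vanish on $\Sigma_0$, uniqueness for the linear wave equation gives $\Gamma^\alpha \equiv 0$, so the gauge is propagated and $\mathbf{R}_{\alpha\beta}=0$ holds throughout the development.

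Finally, to reach the geometric conclusion I would construct the maximal foliation on this development. This amounts to solving the prescribed-mean-curvature problem $\mathrm{tr}\, k = 0$ for a time function $t$ having $\Sigma_0$ as initial leaf, which in turn determines the lapse $n$ through the elliptic equation $\Delta n = |k|^2 n$ with $n\to 1$ at spatial infinity. The evolution equations (7)--(8) and the lapse equation (9) then follow from the $3+1$ decomposition of $\mathbf{R}_{\alpha\beta}=0$, and the regularity statements (a)--(c) are obtained by combining the hyperbolic estimates for $g$ and $k$ with elliptic estimates for $n$ in the weighted spaces. I expect the main obstacle to lie in this last stage: the elliptic lapse equation is coupled to the hyperbolic evolution, and controlling the maximal foliation requires the isoperimetric and asymptotic-flatness hypotheses precisely so that the elliptic estimates, and the local solvability of the maximal surface equation near $\Sigma_0$, close within the weighted Sobolev framework.
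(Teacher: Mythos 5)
The paper does not actually prove this theorem: it is quoted verbatim from Choquet-Bruhat's local well-posedness result as restated in Christodoulou--Klainerman, so there is no internal proof to compare against. Your outline is the standard and correct route to that cited result: harmonic-coordinate reduction of $\mathbf{R}_{\alpha\beta}=0$ to a quasidiagonal system of quasilinear wave equations, energy estimates and Picard iteration for local existence, propagation of the gauge via the contracted Bianchi identity and the constraints, and then construction of the maximal foliation with the lapse solving $\Delta n = |k|^2 n$, $n\to 1$ at infinity. Two remarks. First, the isoperimetric-constant hypothesis enters not to place the data in the weighted spaces $H_{n,s}$ but to validate the Sobolev embeddings and elliptic estimates on the noncompact slices; and the last stage you flag as the main obstacle is indeed where the real work lies, since one must show that maximal slices exist, remain graphs over $\Sigma_0$, and sweep out a uniform slab $[0,T]$ -- this is a coupled elliptic--hyperbolic problem, not a corollary of the wave-gauge existence theorem. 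Second, it is worth noting that when the present paper proves its own local existence theorem (Theorem 3.2.1), it deliberately avoids your maximal/elliptic stage altogether by working in a transversal time-harmonic gauge, where the vanishing shift and the condition $\square_{\bf g}t=0$ force $n=\sqrt{g}\,f$ algebraically, so the reduced system is purely hyperbolic and localizes in space; this is exactly the feature that the global maximal-gauge theorem you are sketching lacks.
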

where for a given tensorfield $h$ on $\Sigma$, $\|h\|_{H_{n,s}}$ denotes the norm :
$$\|h\|_{H_{n,s}}= \huge( \sum_{i=0}^n\int_\sigma (1+d_0^2)^{s+i}\abs{\nabla^i h}^2 \huge)^\demi$$
\vspace{5 mm}

where $d_0$ stands for the geodesic distance to a basepoint $O$.

 The natural question for such a local result is the minimum regularity of the initial data required for having well-posedness. The scaling property of Einstein vacuum equation could lead to think that the critical Sobolev exponent for $g$ should be $3/2$. In wave coordinates, the local existence regularity can be extended to $5/2 + \e $ Sobolev regularity. An important improvement led to the proof for $2 + \e $ (see \cite{kr2}). The so-called $L^2$-conjecture asserts that the optimal regularity for the initial data set is $Ric(g)\in L^2(\Sigma_0)$ and $\nabla k \in L^2(\Sigma_0)$.

\vspace{2 mm}
We are now ready to state the main result of  \cite{krb} :

\begin{thm}
Let  $\mathbf{(M,g)}$ be a globally hyperbolic development of an asymptotically flat initial data set $(\Sigma_{0},g_0,k_0)$ satisfying the assumptions of the local in time theorem and globally foliated by a normal, maximal foliation given by the level sets of a smooth time function t such that $\Sigma_{0}$ corresponds to $t=0$. Suppose moreover that $\Sigma_{0}$ satisfies the geometric condition {\bf (G)}. 
We suppose that :
\begin{eqnarray}
&a)& \int_{\Sigma_{0}} \abs{{ \bf R}}^2(x)d\mu(x) \ls \Delta_0\\
&b)& \normf{n^{-1}}_{L^\infty([0,t_1[,L^\infty(\Sigma_{t}))} \ls\Delta_1\\
&c)& \normf{k}_{L^\infty([0,t_1[,L^\infty(\Sigma_{t}))} + \normf{\nabla(log(n))}_{L^\infty([0,t_1[,L^\infty(\Sigma_{t}))} \ls \Delta_2
\end{eqnarray}
than the space-time together with the maximal foliation can be extended beyond time $t_1$.
\end{thm}

This theorem implies  the following breakdown criterion :

\begin{thm} [Breakdown Criterion]
With the notations of the previous theorem, the first time $T*$ with respect to the $t$-foliation of a breakdown is characterized by the condition :

\begin{equation}
\limsup_{t \to T*} {\big( \normf{k}_{L^\infty(\Sigma_{t})}+ \normf{\nabla(log(n))}_{L^\infty(\Sigma_{t})}\big)} =+ \infty
\end{equation}
\end{thm}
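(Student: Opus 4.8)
The plan is to read this statement off the continuation theorem quoted above as its contrapositive. Denote by $T^*$ the supremum of the times to which the maximal development of $(\Sigma_0,g_0,k_0)$, together with its normal maximal foliation, extends, and suppose $T^* < +\infty$. The whole point is to show that a breakdown can occur at $T^*$ \emph{only} if the displayed quantity is unbounded; that is, I will show that whenever
\begin{equation*}
\limsup_{t \to T^*}\Big(\normf{k}_{L^\infty(\Sigma_t)} + \normf{\nabla(\log n)}_{L^\infty(\Sigma_t)}\Big) < +\infty ,
\end{equation*}
the three hypotheses a), b), c) of the continuation theorem hold uniformly on $[0,T^*[$, so that the development extends past $T^*$ and $T^*$ is not a breakdown time. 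The characterization then follows, since the quantity, bounded before $T^*$, is forced to blow up exactly at the first breakdown time.

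So assume the $\limsup$ is finite and let $\Delta_2$ bound $\normf{k}_{L^\infty(\Sigma_t)} + \normf{\nabla(\log n)}_{L^\infty(\Sigma_t)}$ on $[0,T^*[$; this is precisely hypothesis c). Hypothesis a) bounds the curvature integral $\int_{\Sigma_0}\abs{\mathbf{R}}^2$ over the fixed initial slice and follows from the assumed regularity of the data, hence holds. The only substantial point is hypothesis b), a uniform lower bound $n \rs \Delta_1^{-1} > 0$, and I would establish it as follows. Since $n > 0$ solves $\Delta n = |k|^2 n$ with $n \to 1$ at spatial infinity, the right-hand side is nonnegative, $n$ is subharmonic, and the maximum principle gives $0 < n \ls 1$ on every leaf. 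For the lower bound, the control $\abs{\nabla(\log n)} \ls \Delta_2$ makes $\log n \ls 0$ Lipschitz for $g(t)$ and vanishing at infinity; the evolution $\partial_t g_{ij} = -2 n k_{ij}$ together with $0 < n \ls 1$ and $\abs{k} \ls \Delta_2$ gives $-2\Delta_2\, g \ls \partial_t g \ls 2\Delta_2\, g$, so on the finite interval $[0,T^*[$ the metrics $g(t)$ stay comparable to $g_0$ up to the factor $e^{2\Delta_2 T^*}$ and geodesic distances are uniformly controlled. Using the fast decay of $k$, hence of $1-n$, forced by asymptotic flatness, $\log n$ is pinned near $0$ outside a fixed compact core, and the Lipschitz estimate then propagates a uniform bound $\abs{\log n}\ls C(\Delta_2,T^*)$ inward. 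This yields $n \rs e^{-C} =: \Delta_1^{-1}$, i.e. hypothesis b).

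With a), b), c) in force on $[0,T^*[$, the continuation theorem applies and extends the space-time and its foliation beyond $T^*$, contradicting the assumption that $T^*$ is a breakdown time; hence at a breakdown the $\limsup$ must equal $+\infty$, which is the claimed characterization. The main obstacle is exactly the passage from the bound on $\nabla(\log n)$ to the lower bound on $n$ in hypothesis b): a Lipschitz function vanishing at infinity on a noncompact slice is not a priori bounded below, so closing this step genuinely requires pinning $n$ near $1$ in the asymptotic region via asymptotic flatness and using the finiteness of $T^*$ together with the bound on $k$ to keep the spatial geometry of the region where $n$ is not close to $1$ under uniform control.
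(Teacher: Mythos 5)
The paper offers no proof of this statement beyond the remark that it follows from the continuation theorem, and your contrapositive reading is exactly that implicit argument. Your additional verification of hypothesis b) --- deriving the uniform lower bound on $n$ from $0<n\ls 1$ (maximum principle for the subharmonic lapse), the gradient bound on $\log n$, the comparability of the metrics $g(t)$ over the finite interval, and the pinning of $n$ near $1$ in the asymptotically flat region --- is a correct and worthwhile supplement to a step the paper leaves entirely unaddressed.
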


The pointwise condition can be relaxed to obtain the following theorem (see \cite{dp})

\begin{thm} {\bf [Integral Breakdown Criterion]}
We keep here the notations of the previous theorem. We suppose that there exists $\Delta_0, \Delta_1$ and $\Delta_2$ such that : 
\begin{eqnarray}
&a)& \int_{\Sigma_{0}} \abs{{\bf R}}^2(x)d\mu(x) \ls \Delta_0\\
&b)& \normf{n^{-1}}_{L^\infty([0,t_1[,L^\infty(\Sigma_{t}))} \ls\Delta_1\\
&c)& \int_0^{t_1}(\normf{k}_{L^\infty(\Sigma_{t})}+ \normf{\nabla(log(n))}_{L^\infty(\Sigma_{t})})^2 n dt \ls \Delta_2
 \end{eqnarray}

than the space-time together with the maximal foliation can be extended beyond time $t_1$.
\end{thm}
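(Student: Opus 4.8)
The plan is to run a continuity argument and to show that, under the integral hypothesis, every a priori estimate entering the proof of the pointwise extension theorem can be closed with the weaker control in place of assumption c). Throughout write $B(t)=\|k\|_{L^\infty(\Sigma_t)}+\|\nabla(\log n)\|_{L^\infty(\Sigma_t)}$, so that the hypothesis reads $\int_0^{t_1}B^2\,n\,dt\le\Delta_2$. Two elementary facts about the lapse make this weakening usable. First, $d\tau=n\,dt$ is the proper time along the unit normal $\mathbf{T}=n^{-1}\partial_t$, so the hypothesis is exactly an $L^2$ bound for $B$ in proper time. Second, the lapse equation $\Delta n=|k|^2n$ with $n\to1$ at infinity makes $n$ subharmonic, whence $0<n\le1$ by the maximum principle, while assumption b) gives $n\ge\Delta_1^{-1}$; these two bounds let me pass freely between integration in $t$ and in $\tau$ and show that the total proper time stays $\le t_1$.

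The engine of the proof is that $B$ never enters the estimates of the extension theorem pointwise in time: it appears only as the coefficient of curvature- and Ricci-coefficient-controlling quantities in transport and energy identities, which after integration along $\mathbf{T}$ take the Gronwall form $\frac{d}{d\tau}Q\le C(B+B^2)\,Q+(\text{l.o.t.})$, in which $B$ occurs linearly (from the deformation tensors of the multipliers) and quadratically (from terms such as $|k|^2$ and $|\nabla\log n|^2$). The quadratic contributions are controlled directly by $\int_0^{t_1}B^2\,n\,dt\le\Delta_2$, while the linear ones are handled by Cauchy--Schwarz together with $n\le1$, since $\int_0^{t_1}B\,n\,dt\le(\int_0^{t_1}B^2\,n\,dt)^{1/2}(\int_0^{t_1}n\,dt)^{1/2}\le(\Delta_2\,t_1)^{1/2}$. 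Hence the accumulated growth of $Q$ is bounded by $\exp\!\big(C(t_1+\Delta_2+(\Delta_2 t_1)^{1/2})\big)$, depending only on $\Delta_0,\Delta_1,\Delta_2$ and $t_1$. This is precisely the place where the $L^\infty_t$ hypothesis of the pointwise theorem is replaced by an $L^2_\tau$ one.

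Concretely I would carry out the following steps in order. (i) Re-derive the Bel--Robinson energy and curvature-flux estimates on the maximal slices and on the outgoing null cones, checking that every error term coupling the curvature to $k$ and $\nabla\log n$ is, after one Cauchy--Schwarz, either quadratic in $B$ or linear carrying the weight $n$, so that the integral hypothesis yields a bound on $\|\mathbf{R}\|_{L^\infty([0,t_1[,L^2(\Sigma_t))}$ and on the flux. (ii) Propagate the null structure equations for the Ricci coefficients $\chi,\underline{\chi},\zeta,\dots$ along the generators, controlling them through the $L^2_\tau$ norm of $B$ and the curvature flux from (i); here the passage to proper time is essential. (iii) Feed these bounds into the trace and Strichartz estimates for the geometric wave operator that underlie the continuation mechanism, verifying that the potentials built from the Ricci coefficients are controlled by the same quantities. (iv) Close the bootstrap by the continuity method, concluding that the solution norms of the local existence theorem remain finite as $t\to t_1$, so that the development together with its maximal foliation extends past $t_1$.

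I expect the main obstacle to lie in steps (ii)--(iii): the control of the causal geometry and the associated Strichartz estimates is where the original argument genuinely exploits pointwise-in-time information, since the parametrix construction rests on the non-degeneracy of the null foliation (regularity of the optical function and absence of conjugate points), which in turn needs pointwise bounds for $\chi$ and the eikonal flow. The delicate point is to recover this pointwise control from an $L^2_\tau$ bound on $B$ alone; the resolution is to exploit the precise structure of the forcing terms in the null transport and Bianchi equations --- their quadratic, divergence, or genuinely null character --- rather than crude modulus bounds, and to absorb the borderline contributions using the $L^\infty_t L^2_x$ curvature bound already secured in step (i).
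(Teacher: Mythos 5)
First, a point of reference: the paper does not actually prove this theorem --- it is imported from \cite{dp} (the pointwise version coming from \cite{krb}) --- so there is no in-paper proof to measure your attempt against. Judged against the argument in those references, your outline follows the right general strategy: Bel--Robinson energy and flux estimates, control of the causal geometry of past null cones, a parametrix step to recover pointwise curvature bounds, and a final appeal to local existence. Your bookkeeping for the zeroth-order energy estimate is also correct: the error term $\int \mathbf{Q}\cdot{}^{(\mathbf{T})}\pi$ is linear in $B$, so Gronwall plus Cauchy--Schwarz in proper time closes it under $\int_0^{t_1}B^2\,n\,dt\le\Delta_2$, using $0<n\le 1$ from subharmonicity of the lapse in the maximal gauge.

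The genuine gap is that steps (ii)--(iii), which you yourself identify as the crux, are not proved but only described by the wish that one should ``exploit the precise structure of the forcing terms''. Concretely: first, the higher-order energy estimates involve $\mathbf{D}\,{}^{(\mathbf{T})}\pi$ (that is, $\nabla k$ and $\nabla^2 n$), which is not controlled pointwise by $B$ at all; in the maximal gauge it must be recovered from the Hodge system $\nabla^i k_{ij}=0$, $\nabla\wedge k=E$, $\mathrm{tr}\,k=0$ together with the lapse equation --- an elliptic step your outline omits entirely. Second, the lower bound on the past null radius of injectivity and the control of $\mathrm{tr}\chi$, $\hat\chi$, $\zeta$ and the mass aspect function in the Besov-type norms needed for the Kirchhoff--Sobolev parametrix is precisely where the pointwise hypothesis was used in \cite{krb}; replacing it by an integral hypothesis is the entire technical content of \cite{qw2}, \cite{qw3} and \cite{dp} (trace estimates from slices to cones, the reduced curvature flux, the bootstrap on $\mathcal{N}^-(p,\delta)$). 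Asserting that the borderline terms can be absorbed is not a proof that they can: the transport equation for $\mathrm{tr}\chi$ along the generators contains $|\hat\chi|^2$, whose integrability requires the full flux bootstrap and not merely the $L^2_\tau$ control of $B$. As it stands, your text is a correct map of where the proof must go, not a proof.
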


This theorem implies of course the corresponding breakdown criterion. 
In both cases, the breakdown criteria are however (spatially) global. However, the question of physical relevance related to continuation of solutions of Einstein equations is the question of local extendability. Obtaining information on local extendability of spacetimes requires to localize in space these breakdown criteria. 

Let us recall some definitions relevant to cosmic censorship conjectures.
\begin{defi}
A spacetime {\bf (M,g)} is said locally inextendible if there exists no open subset $\mathcal{U} \subset {\bf M}$ with non-compact closure in ${\bf M}$ such that there exists an isometric imbedding $\phi : (\mathcal{U}, {\bf g}_{|\mathcal{U}}) \rightarrow (\mathcal{U}', {\bf g'})$ where $\phi(\mathcal{U})$ has compact closure in $\mathcal{U}'$.
\end{defi}

\begin{conj}[Strong Censorship Conjecture]
The maximal globally hyperbolic vacuum extension (MGHVE) of a "generic" initial data set $(\Sigma,g,k)$ where $\Sigma$ is either compact or asymptotically flat is locally inextendible as a $\mathcal{C}^{1,1}$ Lorentzian manifold.
\end{conj}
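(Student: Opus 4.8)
The plan is to argue by contradiction and to use the localized breakdown criterion of this article as the engine. Suppose the MGHVE $(\mathbf{M},\mathbf{g})$ of a generic data set $(\Sigma,g,k)$ were locally extendable as a $\mathcal{C}^{1,1}$ Lorentzian manifold, so that there is an open $\mathcal{U}\subset \mathbf{M}$ with non-compact closure and an isometric embedding $\phi:(\mathcal{U},\mathbf{g}_{|\mathcal{U}})\to(\mathcal{U}',\mathbf{g}')$ with $\phi(\mathcal{U})$ of compact closure. The first reduction is to translate this into the existence of a boundary point $p\in\partial\phi(\mathcal{U})$ together with a neighborhood on which the extended metric $\mathbf{g}'$ is $\mathcal{C}^{1,1}$; in particular the curvature $\mathbf{R}$ of $\mathbf{g}'$ stays bounded in $L^\infty$ as one approaches $p$ along a causal curve from inside $\phi(\mathcal{U})$.

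Next I would install, near $p$, a \emph{local} maximal time-harmonic transversal gauge of the kind developed throughout this paper, adapted to a timelike geodesic reaching $p$. The goal is to recast the hypothesized $\mathcal{C}^{1,1}$ regularity as the three hypotheses of the integral breakdown criterion, restricted to a small causal diamond: a local energy bound $\int\abs{\mathbf{R}}^2\,d\mu\ls\Delta_0$, a lower bound $n^{-1}\ls\Delta_1$ on the lapse, and the integral control of $\norm{k}_{L^\infty}+\norm{\nabla(\log n)}_{L^\infty}$ along the leaves. The localized version of the breakdown criterion — the advertised contribution of the present article — should then guarantee that the local development, together with its maximal foliation, extends strictly beyond the time $t_1$ at which the geodesic reaches $p$.

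The decisive and delicate step is to turn this local continuation into a contradiction with \emph{global} maximality, and this is where the main difficulty sits. The extension furnished by the breakdown criterion is a \emph{globally hyperbolic} extension of a local diamond, whereas a $\mathcal{C}^{1,1}$ extension across $\partial\phi(\mathcal{U})$ need not be globally hyperbolic: the boundary is a putative Cauchy horizon. The classical solutions (Kerr, Reissner--Nordström, Taub--NUT) exhibit smooth Cauchy horizons across which the vacuum metric extends with bounded curvature, so no purely local regularity argument can close the contradiction for \emph{all} data — this is precisely why the genericity hypothesis is indispensable. I would therefore invoke genericity to exclude the bounded-curvature scenario at the horizon, following the blue-shift and mass-inflation philosophy: show that for a Baire-generic (or measure-theoretically generic) set of data, fluctuations transported along the ingoing null cones are amplified toward the would-be Cauchy horizon, forcing $\int\abs{\mathbf{R}}^2$ to diverge, or the integral bound on $\norm{k}_{L^\infty}+\norm{\nabla(\log n)}_{L^\infty}$ to fail, in every neighborhood of $p$. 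Combined with the first three steps, which assert that any bounded-curvature boundary point yields a contradiction, this would establish local inextendibility for generic data.

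The step I expect to be the genuine obstacle is the last one in the absence of symmetry: quantifying the blow-up and the genericity of the data set without a symmetry reduction is exactly the open content of Strong Cosmic Censorship. The localized criteria of this paper reduce the conjecture to a \emph{sharp} curvature-blow-up statement at the Cauchy horizon, but they do not by themselves supply the required nonlinear instability estimate, which must control the full vacuum dynamics near the horizon.
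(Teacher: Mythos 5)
This statement is a \emph{conjecture}: the paper does not prove it and does not claim to. It is stated purely as motivation for localizing the breakdown criteria, so there is no ``paper's own proof'' to compare against, and any complete proof you produced would be a solution to Strong Cosmic Censorship. Judged as a proof attempt, your proposal has a genuine and fatal gap --- one you candidly identify yourself in the final paragraph. The last step, showing that for generic data the curvature energy $\int\abs{\mathbf{R}}^2$ or the integral bound on $\norm{k}_{L^\infty}+\norm{\nabla(\log n)}_{L^\infty}$ must blow up at every would-be Cauchy horizon, is not a technical lemma to be supplied later: it \emph{is} the conjecture. The localized criteria of this paper only reformulate inextendibility as a blow-up statement; they contribute nothing toward the nonlinear instability mechanism (blue-shift, mass inflation) that would have to be established without symmetry assumptions.

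There is also a problem earlier in your argument, in the step you present as unproblematic. You assert that ``any bounded-curvature boundary point yields a contradiction'' via the localized breakdown criterion. It does not. The criterion extends the \emph{globally hyperbolic} development of a local slice; as you yourself observe two sentences later, a $\mathcal{C}^{1,1}$ extension across a Cauchy horizon need not be globally hyperbolic, so the continued local development and the hypothesized extension can coexist without contradicting maximality of the MGHVE. Kerr and Taub--NUT realize exactly this: bounded curvature at the horizon, a smooth extension, and no contradiction with the breakdown criteria. So the first three steps do not ``assert'' a contradiction at bounded-curvature boundary points; they establish nothing in that case, and the entire burden falls on the genericity step, which remains open. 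The proposal is a reasonable sketch of the community's standard strategy, but it is not a proof, and the paper offers none.
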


The difficulty throughout this paper will be that we will have to deal with slices $S_t$ which are non-complete Riemmanian manifold. Among the consequences, the solution constructed from data $(S_t,g,k)$ will typically have a temporal extent tending to zero near the boundary of $S_t$. From another hand, some techniques based on elliptic estimates on the slices used in the proof of global criteria with $\Sigma$ compact or asymptotically flat have to replaced by purely hyperbolic estimates. This will motivate the use of a transversal time-harmonic gauge.

\section{Local breakdown of EVE :}
\subsection{The transversal time-harmonic gauge :}

\hspace{5mm}The main goal of this paper will be to localize the breakdown criteria of part 2. Among the different choices of gauge possible, some of them have been largely used because of their physical significance (asympotically flat spacetimes and maximal gauge, CMC gauge) or because of their mathematical properties (the wave gauge enables for instance to reduce the principal symbol of the Ricci tensor to $-\frac{1}{2}\square_{\bo g}{\bo g}_{\al \bet}$). We will however use here another choice of gauge well-suited for spatially localized theorems, a transversal time-harmonic gauge. The reason why this type of gauge condition is well-suited for local-in-space applications is that the reduced system that we will obtain is hyperbolic instead of being a mixed hyperbolic-elliptic system as in the case of the CMC or AF gauge where we have to solve an elliptic system for the lapse. In comparison with the harmonic gauge, we will also take advantage of the transversality.

Fixing a gauge requires to fix four quantities while working on $\Sigma \times \mathbb{R}$ with transported coordinates, where $\Sigma$ stands for a 3-dimensional Riemannian manifold. The choice made here is having a vanishing shift or equivalently a choice of time-lines orthogonal to the space-sections and an harmonic time index. That is if denote by $(x^0,x^2,x^3,x^4)$ by a local set of coordinates :

\begin{equation}
\square_{\bo g} t = -{\bo g}^{\al \bet}\Gamma^0_{\al \bet}=\frac{1}{2n^2} (\frac{\partial}{\partial x^0}(log(g))-  2\frac{\partial}{\partial x^0}(log(n))=0
\end{equation}

where $n$ stands for the lapse and $g$ for the the determinant of the three-dimensional Riemannian metric on the spatial-sections.
We see immediately that the harmonicity condition on the time-index leads to the following equality on the whole four-dimensional manifold :

\begin{equation}
n=\sqrt{g}f
\end{equation}

where $f>0$ stands for a scalar density of order one given on $\Sigma$. Equivalently, we can give ourself a Riemannian metric $e$ on $\Sigma_0$ and set $f=(det(e))^{-1/2}$.

We set moreover :

\begin{equation}
P^{ij}=k^{ij}-trk g^{ij}
\end{equation}

The reduction of the Cauchy problem for Einstein equations in this particular gauge was performed in \cite{cb} :

\begin{eqnarray}
\frac{\partial}{\partial x^0} g^{ij}&=& n ( 2 P^{ij} -g^{ij}P) \\
 \square_{\bo g} P^{ij}&=&nM^{ij}\\
 \frac{\partial}{\partial x^0}(log(n)) & =&\demi \frac{\partial}{\partial x^0}(log(g))=-ntr(k) \\ 
 \nonumber
\end{eqnarray}
with a set of data $S=(g_0,P_0, \frac{\partial}{\partial x^0}P_0,e)$. Where $M$ stands for lower-order terms (of order at most 2 in $g$ or $n$ and one in $P$). This is a quasi-diagonal hyperbolic system denoted by $\mathcal{S}$ for which standard techniques apply.

\vspace{2mm}
Recalling that :
\begin{equation} 
\Delta n = \abs{k}^2n -  \frac{\partial}{\partial x^0} tr(k)
\end{equation}

and differentiating (24), we get :

\begin{eqnarray}
\frac{\partial^2}{\partial x^{0\hspace{1mm}2}} n &=& -2n(\frac{\partial}{\partial x^0}n)trk-n^2\frac{\partial}{\partial x^0}trk \\
&=& 2n^3 (trk)^2 -n^3 \abs{k}^2+n^2\Delta n
\end{eqnarray}
or
\begin{eqnarray}
\square_{\bf g}n&=&n(\abs{k}^2-2(trk)^2)+ (^{(3)}{\Gamma}^{i} - ^{(4)}{\Gamma}^{i})\nabla_i n   \nonumber \\
&=&n(\abs{k}^2-2(trk)^2+  \nabla^i (logn)  \nabla_i (logn))
\end{eqnarray}

Note that for a general transversal gauge, we have :
\begin{equation*}
\square_{\bf g}n=n(\abs{k}^2-2(trk)^2 +  \nabla^i (logn)  \nabla_i (logn)) - \frac{\partial}{\partial x^0}(n\Gamma^0)-2n^3(\Gamma^0)^2+4n^2trk\Gamma^0
\end{equation*}

The time-harmonic gauge plays an import role in the simplification leading to a fully hyperbolical structure for ${^{(\bo T)}{\pi}}_{\mu \nu}$.

Now if we denote by ${\bo P}_{\al \bet}= {\bo g} + {\bo T}_{\al} {\bo T}_{\bet}$ the projection operator on $\Sigma_t$ and if we denote without change of notation $k_{\al \bet}=-\demi {\bo P}_{\al}^{\nu} {\bo P}_{\bet}^{\mu} {^{(\bo T)}{\pi}}_{\mu \nu}$

It has been proved in \cite{cb} that :

\begin{eqnarray*}
\large( \frac{1}{n^2}\frac{\partial^2}{\partial x^{0\hspace{1mm}2}}-\Delta_{ g} \large) (nk)_{ij}&=& -3nk_{h(i}R_{j)}^h+2R_{i\hspace{2mm}j}^{\hspace{2mm}h\hspace{2mm}m}nk_{hm}+2\nabla_{(i}log(n)\nabla_{j)}(ntrk)+4nk\nabla_i\nabla_jlog(n) \nonumber \\
&&+nk_{h(i}\nabla_{j)}\nabla^hlog(n)+\nabla^hlog(n)\nabla_hnk_{ij}+3nk\nabla_i log(n)\nabla_j log(n)\\
&&+4nk^3_{ij}
\end{eqnarray*}

\hspace{3mm}

Schematically, we can write :
 \begin{equation}
\square_{\bo g} k_{ij} \approx ^{\bo (T)}{\pi}^3 +{\bo R}\cdot ^{\bo (T)}{\pi} +^{\bo (T)}{\pi}\cdot {\bo D}^{\bo (T)}{\pi}
 \end{equation}
 
 where we have used that $R\approx \bo R+ k^2$. In fact, we remark that $^{\bo (T)}\pi$ satisfies a wave equation. Indeed using (28), ${\bo T}=n {\bo D}t$ and :
 
 \begin{equation}
 \square_{\bo g} {\bo D}_{\al} {\bo D}_{\bet} t = - {\bo R}_{\al \hspace{2mm} \bet }^{\hspace{2mm} \mu\hspace{3mm} \nu} {\bo D}_{\mu} {\bo D}_{\nu} t 
 \end{equation}
 
 we get :
 
 \begin{equation}
\square_{\bo g} {^{\bo (T)}{\pi}} \approx  ^{\bo (T)}{\pi}^3 +{\bo R}\cdot ^{\bo (T)}{\pi} +^{\bo (T)}{\pi}\cdot {\bo D} ^{\bo (T)}{\pi}
\end{equation}

From now on, in this part we will suppose that the two following properties are satisfied :
\begin{itemize}
\item the Riemannian manifolds $\Sigma_t$ satisfy uniformly {\bf (G)}
\item there exists $C>1$, s.t. $C^{-1}\leq n \leq C$
\end{itemize}

These properties will be satisfied as a consequence of (24) in our applications.

The idea underlying the rest of this part is that the cubic wave equation is locally well-posed in dimension 3+1 for initial data in $H^\demi(\mathbb{R}^3)$. 
Let us now consider a generic equation of the form :

\begin{equation}
\square_m \phi = P(\phi, {\bo D}\phi)
\end{equation}
where $\phi$ stands for a scalar function $\phi : \mathbb{R}^3\rightarrow \mathbb{R}$, and 
\begin{equation*}
P(X,Y)=a_{00}+a_{10}X+a_{01}Y+a_{20}X^2+a_{3,0}X^3+a_{11}XY
\end{equation*}
where the $a_{ij}$ are supposed to be given in $C_0^{\infty}(\mathbb{R}^{1+3})$, we suppose also that  a set of initial data $(\phi_{t=0}, \partial_t \phi_{t=0})=(\phi_0,\phi_1)$ is given in $C_0^{\infty}(\mathbb{R}^{1+3})\times C_0^{\infty}(\mathbb{R}^{1+3})$. Then by standard techniques, the Cauchy problem (32) with initial data 
$(\phi_0,\phi_1)$ is locally well-posed, and the unique solution $\phi$ can be extended to any slab $[0,T]\times \mathbb{R}^3$ as long as :

\begin{equation}
\int_0^T\norm{\phi}_{L^\infty}dt'<\infty
\end{equation}

Note that it is crucial that we do not allow terms of the form $a_{02}Y^2$, a continuation condition for such an equation would require a $L^1_tL^\infty_x$ control on $\partial \phi$ and not only on $\phi$. This is typically the case of the toy-model for Einstein equation given by $\square \phi= \partial \phi \cdot \partial \phi$. 

Having in mind this model equation, we can hope use to prove a continuation principle for the wave equation satisfied by $^{\bo (T)}{\pi}$, provided that we are able to deal with a curved spacetime with rough metric instead of a Minkowskian one. 

\vspace{3mm}

We consider from now on a regular subset of $\Sigma_t$, $\Omega_t$ and denote by $\mathcal{I}$ its past and $\Omega_s=\Sigma_s\cap\mathcal{I}$. Let $\mathcal{H}$ denotes its lateral boundary which is supposed to be Lipschitz and $\bo{L}$ a null generator. This type of energy estimates for Einstein equations localized in a domain has been used in \cite{CBlocal}.

\begin{thm}
 $^{\bo (T)}{\pi}\in L^1_tL^\infty(\Omega_t)$,  $^{\bo (T)}{\pi}\in L^2(\Omega_0)$ and  ${\bo D}^{\bo (T)}{\pi}\in L^2(\Omega_0)$ implies  $^{\bo (T)}{\pi}\in L^\infty_tL^4(\Omega_t)$ and ${\bo D}^{\bo (T)}{\pi}\in L^\infty_tL^2(\Omega_t)$.
\end{thm}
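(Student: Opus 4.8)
The structural fact we exploit is the wave equation derived in the excerpt, namely
\begin{equation*}
\square_{\bo g} {^{\bo (T)}{\pi}} \approx {^{\bo (T)}{\pi}}^3 + {\bo R}\cdot {^{\bo (T)}{\pi}} + {^{\bo (T)}{\pi}}\cdot {\bo D}{^{\bo (T)}{\pi}}.
\end{equation*}
Writing $\pi := {^{\bo (T)}{\pi}}$ for brevity, the goal is to close a Gr\"onwall-type bound on the energy $E(t) = \|{\bo D}\pi\|_{L^2(\Omega_t)}^2 + \|\pi\|_{L^2(\Omega_t)}^2$, in terms of $\int_0^t \|\pi\|_{L^\infty(\Omega_s)}\,ds$. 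First I would contract the wave equation with the natural multiplier $\T^\al {\bo D}_\al \pi$ (the timelike energy current associated with $\T$) and integrate over the truncated past domain $\mathcal{I}$ bounded by $\Omega_0$, $\Omega_t$ and the lateral null boundary $\Hyp$. The divergence theorem produces the difference of energy fluxes through $\Omega_t$ and $\Omega_0$, a flux term through $\Hyp$, a bulk term from the deformation of the multiplier, and the spacetime integral of the source $P(\pi,{\bo D}\pi)$ paired with $\T^\al{\bo D}_\al \pi$.

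**The boundary flux through $\Hyp$ must be controlled by a sign.** Because $\Hyp$ is achronal (its generator $\bo{L}$ is null and the region is the past of $\Omega_t$), the energy-momentum tensor of $\pi$ contracted with $\T$ and the null conormal to $\Hyp$ yields a flux of a definite sign, so that term can be discarded (or kept as a nonnegative quantity on the left-hand side). This is the step where the assumptions that $\Hyp$ is Lipschitz and $\bo{L}$ is a null generator enter, and it is exactly the device that replaces the global elliptic estimates of the AF/CMC setting by a purely hyperbolic, domain-of-dependence argument — the motivation stated in the introduction. I would invoke the fact that $C^{-1}\le n\le C$ and that the $\Sigma_t$ satisfy {\bf (G)} uniformly to guarantee that the volume forms, the multiplier $\T$, and the induced metrics on $\Omega_t$ are all comparable to their flat counterparts, so that $E(t)$ genuinely controls $\|\pi\|_{L^\infty_t L^2}$ and $\|{\bo D}\pi\|_{L^\infty_t L^2}$ with uniform constants.

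**The heart of the estimate is bounding the source term $\int_{\mathcal{I}} P(\pi,{\bo D}\pi)\cdot \T^\al {\bo D}_\al\pi$.** The cubic and linear terms ${^{\bo (T)}{\pi}}^3$ and ${\bo R}\cdot{^{\bo (T)}{\pi}}$ are handled by pulling out an $L^\infty$ norm of $\pi$ (or of the curvature, which is itself quadratic in $\pi$) and closing with $E(t)$: for instance $\int \pi^3\cdot{\bo D}\pi \lesssim \|\pi\|_{L^\infty(\Omega_s)}\,\|\pi\|_{L^2}\,\|{\bo D}\pi\|_{L^2}$, so that the time integral is bounded by $\int_0^t \|\pi\|_{L^\infty}\,E(s)\,ds$. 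The delicate term is ${^{\bo (T)}{\pi}}\cdot{\bo D}{^{\bo (T)}{\pi}}$ paired against ${\bo D}\pi$, which is formally $\int \pi\,|{\bo D}\pi|^2$: this is quadratic in the top-order derivative and \emph{cannot} be absorbed by an $L^\infty$ bound on ${\bo D}\pi$ — it is precisely the forbidden $a_{02}Y^2$ structure flagged in the discussion after \eqref{} in the excerpt. The resolution, and what I expect to be the main obstacle, is to see that this term is nonetheless linear in the \emph{undifferentiated} field: estimating $\int \pi\,|{\bo D}\pi|^2 \lesssim \|\pi\|_{L^\infty(\Omega_s)}\,\|{\bo D}\pi\|_{L^2}^2 \lesssim \|\pi\|_{L^\infty(\Omega_s)}\,E(s)$, we see that only the $L^1_t L^\infty$ norm of $\pi$ itself (the hypothesis) — not of ${\bo D}\pi$ — is needed. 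This is the exact analogue of why the scalar model $\square\phi = P(\phi,{\bo D}\phi)$ continues under $\int_0^T\|\phi\|_{L^\infty}dt<\infty$ rather than a derivative control.

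**Finally I would assemble the inequality and apply Gr\"onwall.** Combining the flux identity with the source estimates yields
\begin{equation*}
E(t) \ls E(0) + C\int_0^t \big(1 + \|\pi\|_{L^\infty(\Omega_s)}\big)\,E(s)\,ds,
\end{equation*}
where $E(0)$ is finite by the hypotheses ${^{\bo (T)}{\pi}}\in L^2(\Omega_0)$ and ${\bo D}{^{\bo (T)}{\pi}}\in L^2(\Omega_0)$, and the integrating factor $\exp\!\big(C\int_0^t(1+\|\pi\|_{L^\infty})\,ds\big)$ is finite precisely because ${^{\bo (T)}{\pi}}\in L^1_tL^\infty(\Omega_t)$. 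Gr\"onwall then gives a uniform bound on $E(t)$ for $t$ in the interval, which is exactly the assertion ${^{\bo (T)}{\pi}}\in L^\infty_tL^4(\Omega_t)$ (via the $L^2$-control of ${\bo D}\pi$ together with a Sobolev-type embedding on the slices, again using {\bf (G)}) and ${\bo D}{^{\bo (T)}{\pi}}\in L^\infty_tL^2(\Omega_t)$. The one point requiring care beyond the routine is the treatment of the lateral boundary $\Hyp$ when it is merely Lipschitz rather than smooth, which I would handle by approximation or by directly checking that the flux sign is preserved almost everywhere along $\Hyp$.
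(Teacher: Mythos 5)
Your overall skeleton --- the energy--momentum tensor of the wave equation for $^{\bo (T)}{\pi}$ contracted with $\T$, Stokes' theorem on the truncated past domain, discarding the nonnegative null flux through $\Hyp$, and a Gr\"onwall closure driven by the $L^1_tL^\infty$ hypothesis --- is exactly the paper's. But there is a genuine gap in your treatment of the cubic source term, and it sits precisely where the $L^\infty_tL^4$ part of the conclusion (which you relegate to an afterthought via Sobolev embedding) does the real work. Writing $\pi={}^{\bo (T)}{\pi}$, your claimed bound $\int |\pi|^3|{\bo D}\pi| \lesssim \normf{\pi}_{L^\infty}\normf{\pi}_{L^2}\normf{{\bo D}\pi}_{L^2}$ is not a correct application of H\"older: pulling out one factor of $\normf{\pi}_{L^\infty}$ leaves $\int |\pi|^2|{\bo D}\pi| \ls \normf{\pi}_{L^4}^2\normf{{\bo D}\pi}_{L^2}$, not $\normf{\pi}_{L^2}\normf{{\bo D}\pi}_{L^2}$. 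If you repair this by pulling out two $L^\infty$ factors, Gr\"onwall requires $\int_0^{t_0}\normf{\pi}_{L^\infty}^2\,ds<\infty$, i.e.\ an $L^2_tL^\infty_x$ hypothesis, strictly stronger than the assumed $L^1_tL^\infty$ (and the paper stresses elsewhere that the $L^1_t$ versus $L^2_t$ distinction is the whole point of this result). If instead you repair it with your proposed Sobolev embedding $\normf{\pi}_{L^4}^2\lesssim E$, the source term becomes superlinear in $E$ (equivalently, the final inequality takes the form $X\lesssim C+X^2$), which does not close without a smallness assumption that is not available.

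The paper's resolution, which your proposal is missing, is to treat $\normf{\pi}_{L^4(\Omega_t)}^4$ as a second propagated quantity coupled to $\normf{{\bo D}\pi}_{L^2(\Omega_t)}^2$, rather than recovering it from $H^1$ at the end. The energy identity first gives $\normf{{\bo D}\pi}_{L^\infty_tL^2}^2\lesssim C+\normf{\pi}_{L^\infty_tL^4}^4$; then, separately, a transport estimate along the integral curves of $-\partial_t$ (the maps $F^u_s$, under which the domains are nested) yields $\normf{\pi}_{L^4(\Omega_s)}^4\ls C+D\int_0^s\normf{\pi}_{L^\infty(\Omega_u)}\big(\normf{{\bo D}\pi}_{L^2(\Omega_u)}^2+\normf{\pi}_{L^4(\Omega_u)}^4\big)\,du$. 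With respect to the pair $\big(\normf{{\bo D}\pi}_{L^2}^2,\normf{\pi}_{L^4}^4\big)$ every source term is then linear with coefficient $\normf{\pi}_{L^\infty(\Omega_s)}$, and the coupled system closes by Gr\"onwall under exactly the $L^1_tL^\infty$ hypothesis. Note also that your identification of $\pi\cdot|{\bo D}\pi|^2$ as ``the delicate term'' is slightly off: that term is disposed of immediately by one $L^\infty$ factor; the delicate bookkeeping is entirely in the cubic term and the quartic $L^4$ structure above.
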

\begin{proof}
We introduce the energy-momentum tensor $O_{\al \bet}$ associated to the previous wave equation :

\begin{eqnarray*}
O_{\al \bet}&=& h^{\delta \delta'}h^{\gamma \gamma'} ... {\bf D}_{\al} ^{\bo (T)}{\pi}_{\delta \gamma ...}{\bf D}_{\bet} ^{\bo (T)}{\pi}_{\delta' \gamma' ...} \\
&&-\demi {\bf  g}_{\al \bet} {\bf  g}^{\mu \nu} h^{\delta \delta'}h^{\gamma \gamma'} ... {\bf D}_{\mu} ^{\bo (T)}{\pi}_{\delta \gamma ...}{\bf D}_{\nu} ^{\bo (T)}{\pi}_{\delta' \gamma' ...}
\end{eqnarray*}
$O$ satisfies the following relations :
$$
O({\bf T}, {\bf T}) = \demi \abs{^{\bo (T)}{\pi}}^2
$$
as well as the following positivity property, forall pair of fututre oriented timelike or null vectorfield $({\bf S_1},{\bf S_2})$, $O({\bf S_1},{\bf S_2}) \geq 0$.
 
Moreover, we have :

\begin{equation*}
\abs{{\bo D}^\al(O_{\al0})}\lesssim  (\abs{ ^{\bo (T)}{\pi}}^3\abs{{\bo D}^{\bo (T)}{\pi}} +\abs{{\bo R}}\abs{^{\bo (T)}{\pi}}\abs{{\bo D}^{\bo (T)}{\pi}} +\abs{^{\bo (T)}{\pi}}\abs{{\bo D}^{\bo (T)}{\pi}}^2)
\end{equation*}
Denote by $A_t=\int_{\Omega_t}\abs{{\bo D} {^{\bo (T)}{\pi}}}^2-\int_{\Omega_0}\abs{{\bo D} {^{\bo (T)}{\pi}}}^2$. As a consequence of Stokes formula :

\begin{eqnarray*} 
A_t + \int_{\mathcal{H}}O(\bo{T},\bo{L})d\mu_{\mathcal{H}} &\leq&\int_0^t\int_{\Omega_s}(\abs{ ^{\bo (T)}{\pi}}^3\abs{{\bo D}^{\bo (T)}{\pi}} +\abs{{\bo R}}\abs{^{\bo (T)}{\pi}}\abs{{\bo D}^{\bo (T)}{\pi}} +\abs{^{\bo (T)}{\pi}}\abs{{\bo D}^{\bo (T)}{\pi}}^2)d\mu_snds\\
&\leq& \int_0^t \big( n\normf{{^{\bo (T)}{\pi}}}_{L^\infty(\Omega_s)}\int_{\Omega_s}(\abs{ ^{\bo (T)}{\pi}}^2\abs{{\bo D}^{\bo (T)}{\pi}} +\abs{{\bo R}}\abs{{\bo D}^{\bo (T)}{\pi}} +\abs{{\bo D}^{\bo (T)}{\pi}}^2)\big)d\mu_sds\\
&\lesssim&\int_0^t \normf{{^{\bo (T)}{\pi}}}_{L^\infty(\Omega_s)} (\normf{{\bo R}}_{L^\infty_tL^2_x}^2 + \normf{{\bo D}^{\bo (T)}{\pi}}_{L^2(\Omega_s)}^2+ \normf{{^{\bo (T)}{\pi}}}^4)nds
\end{eqnarray*}
using Gronwall lemma and the positivity of the flux  $\int_{\mathcal{H}}O(\bo{T},\bo{L})d\mu_{\mathcal{H}}$ :
\begin{equation}
\normf{{\bo D}{^{\bo (T)}{\pi}}}_{L^\infty_tL^2_x(\Omega_t)}^2 \lesssim C + \normf{{^{\bo (T)}{\pi}}}_{L^\infty_tL^4_x(\Omega_t)}^4
\end{equation}
 but remarking that if $0\leq u\leq s \leq t$, $F^u_s(\Omega_s) \subset \Omega_u$, we have for $0\leq s\leq t$ (where $F_s^u: \Sigma_s \rightarrow \Sigma_u$ is obtained by following the integral lines of $-\partial_t$).
\begin{equation*}
 \normf{{^{\bo (T)}{\pi}}}_{L^4_x(\Omega_s)}^4\leq \normf{{^{\bo (T)}{\pi}}}_{L^4_x(\Omega_0)}^4 +\int_0^s \normf{{^{\bo (T)}{\pi}}}_{L^\infty(F^u_s(\Omega_s))}\big( \normf{{\bo D}{^{\bo (T)}{\pi}}}_{L^2(F^u_s(\Omega_s))}^2 +\normf{{^{\bo (T)}{\pi}}}_{L^4(F^u_s(\Omega_s))}^4  \big)
\end{equation*}
which implies :
\begin{eqnarray*}
 \normf{^{\bo (T)}{\pi}}_{L^\infty_uL^4_x(F^u_s(\Omega_s))}^4 &\leq& C +D\int_0^s \normf{^{\bo (T)}{\pi}}_{L^\infty(F^u_s(\Omega_s))}\normf{{\bo D}^{\bo (T)}{\pi}}_{L^2(F^u_s(\Omega_s))}^2 ndu \\
 &\leq& C+D\int_0^s \normf{^{\bo (T)}{\pi}}_{L^\infty(\Omega_u)}\normf{{\bo D}^{\bo (T)}{\pi}}_{L^2(\Omega_u)}^2 ndu
\end{eqnarray*}

thus :
 \begin{equation*}
 \normf{^{\bo (T)}{\pi}}_{L^4_x(\Omega_s)}^4 \leq C +D\int_0^s \normf{{^{\bo (T)}{\pi}}}_{L^\infty\Omega_s)}\normf{{\bo D}^{\bo (T)}{\pi}}_{L^2(\Omega_s)}^2 ndu
\end{equation*}

plugging in (34) gives $^{\bo (T)}{\pi}\in L^\infty_tL^4(\Omega_t)$ and ${\bo D}^{\bo (T)}{\pi}\in L^\infty_tL^2(\Omega_t)$. 
 
 \end{proof}
 
 Deriving wave equations for ${\bo D^{\bo k}}^{\bo (T)}{\pi}$, $\abs{k}\geq1$ and using the same techniques, we deduce :
 
 \begin{thm}
If for a $k\geq1$, $^{\bo (T)}{\pi}\in L^1_tL^\infty(\Omega_t)$,  $\bo{D^l}^{\bo (T)}{\pi}(t=0)\in L^2(\Omega_0)$ for $l\leq k$ and $\bo {D^l}{\bo R}\in L^\infty_tL^2(\Omega_t)$ for $l\leq (k-1)$  then :

 \begin{center}
$ \bo {D^l}^{\bo (T)}{\pi}\in L^\infty_tL^2(\Omega_t)$ for $l\leq k$.
\end{center}
  \end{thm}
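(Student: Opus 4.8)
The plan is to argue by induction on $k$, the base case $k=1$ being exactly the previous theorem (whose conclusion ${\bo D}{}^{\bo (T)}{\pi}\in L^\infty_t L^2$ and ${}^{\bo (T)}{\pi}\in L^\infty_t L^4$ gives ${\bo D}^l{}^{\bo (T)}{\pi}\in L^\infty_t L^2$ for $l\le 1$ since $\Omega_t$ has finite measure). Assuming the statement at level $k-1$, so that ${\bo D}^l{}^{\bo (T)}{\pi}\in L^\infty_t L^2(\Omega_t)$ for all $l\le k-1$, I seek to add the level $l=k$. First I would commute the wave equation (31) with $k-1$ covariant derivatives: setting $\psi={\bo D}^{k-1}{}^{\bo (T)}{\pi}$ and using the commutation formula $[\square_{\bo g},{\bo D}]\sim {\bo R}\cdot{\bo D}(\cdot)+{\bo D}{\bo R}\cdot(\cdot)$ together with the Leibniz rule, one obtains $\square_{\bo g}\psi=G_{k-1}$ where $G_{k-1}$ is a finite sum of terms of the schematic types
\[ {\bo D}^a{}^{\bo (T)}{\pi}\,{\bo D}^b{}^{\bo (T)}{\pi}\,{\bo D}^c{}^{\bo (T)}{\pi},\quad {\bo D}^a{\bo R}\cdot{\bo D}^b{}^{\bo (T)}{\pi},\quad {\bo D}^a{}^{\bo (T)}{\pi}\cdot{\bo D}^{b+1}{}^{\bo (T)}{\pi}. \]
The three bookkeeping facts that drive everything are: the top-order factor ${\bo D}^k{}^{\bo (T)}{\pi}$ occurs at most linearly and only in the last type with $a=0$; the highest curvature derivative appearing is ${\bo D}^{k-1}{\bo R}$, which is in $L^\infty_t L^2$ by hypothesis; and every remaining factor carries at most $k-1$ derivatives, hence is controlled in $L^\infty_t L^2$ by the induction hypothesis.

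Next I would run the energy argument of the previous theorem verbatim, now for $\psi$: forming the energy--momentum tensor $O$ of $\psi$, applying Stokes on the truncated slab, and discarding the nonnegative flux $\int_{\mathcal H}O({\bo T},{\bo L})\,d\mu_{\mathcal H}$, one obtains
\[ \normf{{\bo D}^k{}^{\bo (T)}{\pi}}_{L^\infty_t L^2(\Omega_t)}^2\lesssim C_0+\int_0^t\int_{\Omega_s}\abs{{\bo D}^k{}^{\bo (T)}{\pi}}\,\abs{G_{k-1}}\,n\,d\mu_s\,ds, \]
where $C_0$ is controlled by the initial data ${\bo D}^l{}^{\bo (T)}{\pi}(t=0)\in L^2(\Omega_0)$, $l\le k$.

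The core of the work, and the main obstacle, is to estimate this space--time integral term by term so that it closes by Gronwall. For the borderline bilinear term ${}^{\bo (T)}{\pi}\cdot{\bo D}^k{}^{\bo (T)}{\pi}$ one pulls out $\normf{{}^{\bo (T)}{\pi}}_{L^\infty(\Omega_s)}$ and is left with $\normf{{\bo D}^k{}^{\bo (T)}{\pi}}_{L^2(\Omega_s)}^2$; the $L^1_tL^\infty$ hypothesis is exactly what makes this coefficient time-integrable, as in the toy model $\square\phi=\partial\phi\cdot\partial\phi$. All other terms carry at least one factor of derivative order strictly below $k$, which I would distribute by H\"older and interpolate by Gagliardo--Nirenberg between the $L^\infty$-controlled factor ${}^{\bo (T)}{\pi}$ and the $L^2$-controlled top factor, using $H^1\hookrightarrow L^6$ to absorb the intermediate norms (e.g. $\normf{{\bo D}^{k-1}{}^{\bo (T)}{\pi}}_{L^3}\lesssim\normf{{\bo D}^{k-1}{}^{\bo (T)}{\pi}}_{L^2}^{1/2}\normf{{\bo D}^k{}^{\bo (T)}{\pi}}_{L^2}^{1/2}$), so that each contribution is bounded by a time-integrable coefficient times $\normf{{\bo D}^k{}^{\bo (T)}{\pi}}_{L^2(\Omega_s)}^2$ plus a bounded remainder. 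The delicate point is that these interpolation and embedding inequalities must hold \emph{uniformly} on the non-complete, causally truncated slices $\Omega_s$: here I would invoke the uniform assumption {\bf (G)} and the bound $C^{-1}\le n\le C$, together with the domain monotonicity $F^u_s(\Omega_s)\subset\Omega_u$ already used in the previous theorem, to transfer the estimates from fixed reference slices and avoid any loss at the Lipschitz lateral boundary $\mathcal H$. A final application of Gronwall's lemma then yields ${\bo D}^k{}^{\bo (T)}{\pi}\in L^\infty_t L^2(\Omega_t)$, closing the induction.
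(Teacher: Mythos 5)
Your proposal is correct and follows essentially the same route as the paper, which disposes of this statement in a single line ("deriving wave equations for ${\bo D^{\bo k}}{}^{\bo (T)}{\pi}$ and using the same techniques"): commuting covariant derivatives through the wave equation (31) and rerunning the energy--momentum/Stokes/flux-positivity/Gronwall argument of the preceding theorem. Your additional bookkeeping of the commutator terms and the observation that the top-order factor enters only linearly with the $L^1_tL^\infty$ coefficient ${}^{\bo (T)}{\pi}$ is exactly the structure that makes the paper's sketch work.
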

  
 We also introduce the reduced flux of $^{\bo (T)}{\pi}$ as :
 
 \begin{equation}
\mathcal{F}(\pi,p,\delta)=\int_{\mathcal{N}^{-}(p,\delta)} \abs{\nablab ^{\bo (T)}{\pi}}^2 +  \abs{\nablab_{\bo{L}} ^{\bo (T)}{\pi}} ^2
 \end{equation}
 
where $\mathcal{N}^{-}(p,\delta)$ stands for the past null cone with vertex $p$ and $\delta$ the null past radius of injectivity at $p$. We denote by $\bo{L}$ the (normalized) null geodesic generator of $\mathcal{N}^{-}(p,\delta)$ that is the vectorfield satisfying $<{\bo L},{\bo L}>=0$, $\bo{D}_\bo{L} \bo{L} =0$ and  $<\bo{T},\bo{L}>(p)=1$. We denote by $s$ the corresponding affine parameter and  $S_s$ the level spheres. $\nablab$ denotes the restriction of $\bo{D}$ to $S_s$ and $\nablab_L$ the projection to $S_s$ of $\bo{D}_L$.

We also have that :
\begin{thm}
For all $p\in {\bo M}$, 
\begin{equation}
\mathcal{F}(\pi,p,\delta) \lesssim C(\norm{^{\bo (T)}{\pi}}_{L^1_tL^\infty(\Omega_t)},  \norm{^{\bo (T)}{\pi}}_{ L^2(\Omega_0)},\norm{{\bo D}^{\bo (T)}{\pi}}_{L^2(\Omega_0)}, \norm{{\bo R}}_{L^2(\Omega_0)})
\end{equation}
\end{thm}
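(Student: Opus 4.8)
The plan is to run the same energy identity as in the first theorem of this subsection, but on the solid region bounded by the past null cone $\mathcal{N}^{-}(p,\delta)$ instead of by the lateral boundary $\mathcal{H}$, so that the reduced flux $\mathcal{F}(\pi,p,\delta)$ is extracted as the non-negative boundary term carried by the null mantle. First I would let $\mathcal{V}$ denote the truncated causal past of $p$, namely the region enclosed between $\mathcal{N}^{-}(p,\delta)$ and the spacelike cap $\mathcal{D}$ cut out at affine parameter $s=\delta$. Applying Stokes' formula to the current $P^{\al}=O^{\al\bet}\bo{T}_{\bet}$ over $\mathcal{V}$, and using that $\bo{L}$ is the future null generator of the cone, the part of the boundary integral carried by $\mathcal{N}^{-}(p,\delta)$ is $\int_{\mathcal{N}^{-}(p,\delta)}O(\bo{T},\bo{L})\,d\mu_{\mathcal{N}}$, which is non-negative by the positivity property of $O$. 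To sidestep the degeneracy of the induced measure and of the normalisation of $\bo{L}$ at the vertex, I would perform this computation on the truncated cones $\{\,s\geq\ve\,\}$ and let $\ve\to0$.

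Next I would decompose $\bo{D}\,^{\bo (T)}{\pi}$ in a null frame adapted to the cone and check that the flux density controls precisely the derivatives tangent to $\mathcal{N}^{-}(p,\delta)$, namely
\[
 O(\bo{T},\bo{L})\gtrsim\abs{\nablab\,^{\bo (T)}{\pi}}^2+\abs{\nablab_{\bo{L}}\,^{\bo (T)}{\pi}}^2,
\]
the implied constant depending only on $\langle\bo{T},\bo{L}\rangle(p)=1$ and on the bounds $C^{-1}\leq n\leq C$. Integrating over the cone then yields $\mathcal{F}(\pi,p,\delta)\lesssim\int_{\mathcal{N}^{-}(p,\delta)}O(\bo{T},\bo{L})$, so it remains to bound the latter.

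The divergence theorem rewrites this flux as
\[
 \int_{\mathcal{N}^{-}(p,\delta)}O(\bo{T},\bo{L})=\int_{\mathcal{D}}O(\bo{T},\N)-\int_{\mathcal{V}}\bo{D}^{\al}O_{\al0}.
\]
The cap term is the energy density on a spacelike disk, bounded by $\normf{\bo{D}\,^{\bo (T)}{\pi}}_{L^2(\Omega_0)}^2+\normf{^{\bo (T)}{\pi}}_{L^2(\Omega_0)}^2$ after transporting the cap energy down to $\Omega_0$ with the $L^\infty_tL^2$ control of the first theorem if $\mathcal{D}\neq\Omega_0$. For the bulk term I would proceed exactly as in that theorem: insert the pointwise estimate $\abs{\bo{D}^{\al}O_{\al0}}\lesssim\abs{^{\bo (T)}{\pi}}^3\abs{\bo{D}\,^{\bo (T)}{\pi}}+\abs{\bo R}\abs{^{\bo (T)}{\pi}}\abs{\bo{D}\,^{\bo (T)}{\pi}}+\abs{^{\bo (T)}{\pi}}\abs{\bo{D}\,^{\bo (T)}{\pi}}^2$, factor out $\normf{^{\bo (T)}{\pi}}_{L^\infty(\Omega_s)}$, apply Hölder, and use that the first theorem already furnishes $^{\bo (T)}{\pi}\in L^\infty_tL^4$ and $\bo{D}\,^{\bo (T)}{\pi}\in L^\infty_tL^2$ out of the hypotheses $^{\bo (T)}{\pi}\in L^1_tL^\infty$ and the initial $L^2$ data, together with the control of $\normf{\bo R}_{L^2}$. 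A Gronwall argument in $s$ would then close the estimate and produce the asserted bound $C(\normf{^{\bo (T)}{\pi}}_{L^1_tL^\infty},\normf{^{\bo (T)}{\pi}}_{L^2(\Omega_0)},\normf{\bo{D}\,^{\bo (T)}{\pi}}_{L^2(\Omega_0)},\normf{\bo R}_{L^2(\Omega_0)})$.

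The hard part will be the rigorous implementation of Stokes' formula on the characteristic cone in this rough-metric regime: the induced measure and the generator $\bo{L}$ degenerate at the vertex, so the truncation-and-limit argument combined with the positivity of $O(\bo{T},\bo{L})$ must be handled with care, and the very existence of $\mathcal{N}^{-}(p,\delta)$ rests on the past null radius of injectivity $\delta$ being positive at $p$. The identification of the flux density with $\abs{\nablab\,^{\bo (T)}{\pi}}^2+\abs{\nablab_{\bo{L}}\,^{\bo (T)}{\pi}}^2$ likewise requires the null-frame decomposition of $O$ to be controlled; once these geometric points are secured, the analytic engine is identical to the energy estimate already proved.
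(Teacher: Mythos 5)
Your proposal is correct and is essentially the intended argument: the paper states this theorem without writing out a proof, and the natural route is precisely yours, namely to rerun the Stokes/Gronwall identity of the first theorem of this subsection on the truncated causal past of $p$, keep the non-negative null boundary term $\int_{\mathcal{N}^{-}(p,\delta)}O(\bo{T},\bo{L})$ as the quantity to be estimated rather than discarding it, identify its density with the tangential derivatives $\abs{\nablab\,^{\bo (T)}{\pi}}^2+\abs{\nablab_{\bo L}\,^{\bo (T)}{\pi}}^2$ defining the reduced flux, and bound the cap and bulk terms by the $L^\infty_tL^4$ and $L^\infty_tL^2$ conclusions already obtained. The geometric caveats you flag (degeneracy at the vertex, positivity of the null radius of injectivity $\delta$) are exactly the points the paper defers to its later discussion of the causal geometry of past null cones.
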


\subsection{Solving the reduced system :}

\hspace{5mm}Once the hyperbolic reduction has been performed, by a simple domain of dependance  argument it's sufficient to consider that the initial data is supported in a fixed coordinate patch. Gauge conditions are propagated (see \cite{cb}) and considering an initial data set $(\Sigma_0,g_0,k_0)$ satisfying the constraint equations and such that $(g,P)$ satisfy the previous quasi-linear hyperbolic system $\mathcal{S}$ on $\Sigma_0 \times \mathbb{R}$, then the metric :

\begin{equation}
{\bf g}=-n^2dt^2+g_{ij}dx^idx^j 
\end{equation}

satisfies Einstein vacuum equations on this set. Moreover, let $\Omega$ be a given subset of $\Sigma$ and $\mathcal{D}(\Omega)$ its domain of dependance for ${\bf g}$, if $(g,P)$ satisfies $\mathcal{S}$ on $\mathcal{D}(\Omega)$, the metric {\bf g} satisfies Einstein vacuum equations on $\mathcal{D}(\Omega)$.
We will now give a version of the local in time and existence which is gauge invariant. Moreover we track the dependance of the temporal extent of the MGHVE on the geometry of the initial slice.

\begin{thm}{\bf [Local existence and uniqueness result]}
Consider an abstract initial data set $(S,g,k)$ for Einstein Vacuum Equation with $(g,P)$ such that 
\begin{eqnarray*}
\norm{R}_{L^2(S)} + \norm{\nabla R}_{L^2(S)}+ \norm{\nabla^2 R}_{L^2(S)} &\leq& \al \\
\norm{P}_{L^4(S)}+\norm{\nabla P}_{L^2(S)}+\norm{\nabla^2P}_{L^2(S)}+ \norm{\nabla^3 P}_{L^2(S)}&\leq& \al
\end{eqnarray*} 

then there exists a MGHVE \footnote{ Globally Hyperbolic Vaccum Extension}  ({\bf M,g}) with Cauchy data $(S,g,k)$ unique up to a diffeomorphism such that the future  temporal extent of any $p \in S $ denoted by $T^+(p)$, that is the maximal length of future causal curves initiating at $p$ in {\bf M} satisfies :
\begin{equation}
T^{+}(p) \geq T(\al,r_h^{3,2}(p))=min(T^\star(\alpha),r_h^{3,2}(p)T^\star(\alpha) )>0
\end{equation}

where $r_h^{3,2}(p)$ stands for the $L^{3,2}$ harmonic radius at $p$ on $S$. (see below for the definition).
\end{thm}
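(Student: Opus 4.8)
The plan is to use the hyperbolic reduction of Section~3.1: in the transversal time-harmonic gauge the Cauchy problem for the Einstein vacuum equations is equivalent to the quasi-diagonal system $\mathcal{S}$ given by (21)--(23), whose unknowns $(g,P)$ carry data $(g_0,P_0,\partial_{x^0}P_0,e)$, and the future temporal extent of the development is read off from the time of existence for $\mathcal{S}$. First I would localize. Since the gauge condition (19) propagates and, for $\Omega\subset S$, a solution of $\mathcal{S}$ on $\mathcal{D}(\Omega)$ produces a metric $\mathbf{g}$ solving the Einstein vacuum equations on $\mathcal{D}(\Omega)$, it suffices by finite speed of propagation to construct the development over a single harmonic coordinate ball: fixing $p\in S$ and setting $r=r_h^{3,2}(p)$, I work on the geodesic ball $B(p,r)$ in $L^{3,2}$ harmonic coordinates and cut the data off smoothly outside it, which does not affect the solution inside the domain of dependence of a slightly smaller ball.

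Then comes the key rescaling. By the very definition of the harmonic radius, on $B(p,r)$ the metric $g$ is uniformly comparable to the Euclidean metric in the chart and its components obey a weighted bound $\sum_{j\leq 3} r^{j-3/2}\norm{\nabla^j g}_{L^2(B(p,r))}\lesssim 1$; setting $\tilde g(y)=g(ry)$ and rescaling time by $t\mapsto t/r$ (the density $e$ rescaling accordingly) transports this into a \emph{universal} $H^3$ bound for $\tilde g$ on the unit ball. The scale invariance of the Einstein vacuum equations sends solutions of $\mathcal{S}$ to solutions of $\mathcal{S}$, and the dimensional weights carried by $R$ and $P$ are such that the hypotheses $\norm{R}_{L^2}+\cdots\leq\alpha$ and $\norm{P}_{L^4}+\cdots\leq\alpha$ become, for the rescaled data on the unit ball, bounds by a universal multiple of $\alpha$ when $r\leq1$. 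Thus the rescaled data of $\mathcal{S}$ is controlled purely in terms of $\alpha$.

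On this unit-scale problem the standard local well-posedness for the quasilinear system $\mathcal{S}$, together with the energy estimates proved above (the $L^\infty_tL^2$ control of ${}^{\bo (T)}{\pi}$ and its higher-order analogue, localized to domains of dependence) used to propagate the $H^3$ regularity of $(g,P)$, yields a solution on a slab $[0,T^\star(\alpha)]$ with $T^\star(\alpha)>0$ depending only on $\alpha$. Undoing the rescaling multiplies the existence time by $r$, giving a temporal extent of at least $r_h^{3,2}(p)\,T^\star(\alpha)$; treating the regimes $r\geq1$ and $r<1$ separately produces the stated lower bound $T^+(p)\geq\min(T^\star(\alpha),r_h^{3,2}(p)T^\star(\alpha))$. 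Finally, uniqueness up to diffeomorphism and the maximality of the extension follow from the classical Choquet-Bruhat--Geroch argument: two developments both satisfy $\mathcal{S}$ in the propagated gauge, hence coincide by uniqueness for $\mathcal{S}$ and are isometric, and the MGHVE is assembled from the local developments by the usual Zorn's lemma patching.

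The step I expect to be the main obstacle is the bookkeeping of this rescaling. One must check that the scaling weights attached to the curvature and to $P$ are all favorable for $r\leq1$, so that the rescaled data is genuinely bounded by $\alpha$ uniformly in $p$ --- this is exactly what the choice of the $L^{3,2}$ harmonic radius, rather than a bare geodesic ball, is designed to guarantee. Equally delicate is that all the energy estimates must be carried out on the shrinking domains of dependence over the non-complete slices $S_t$, rather than on complete hypersurfaces, which is the hyperbolic substitute (emphasized in the Introduction) for the elliptic lapse estimates available only in the compact or asymptotically flat setting.
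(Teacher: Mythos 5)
Your proposal follows essentially the same route as the paper: pass to an $L^{3,2}$ harmonic chart on $B(p,r_h^{3,2}(p))$, extend and cut off the data, solve the reduced quasi-diagonal system $\mathcal{S}$ and use finite speed of propagation together with the propagation of the gauge condition, and obtain the factor $\min(T^\star(\al),r_h^{3,2}(p)T^\star(\al))$ from the scale covariance of the equations and the favourable scaling weights of $R$ and $P$ --- the paper performs this rescaling on the already-constructed development ($(S,\lambda^2 g,\lambda k)\mapsto(\mathbf{M},\lambda^2\mathbf{g})$, with the same $\max(\lambda^{-1/2},\lambda^{-5/2})\al$ bookkeeping) rather than on the data, but the two are equivalent. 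The only minor points of divergence are that the short-time $H^3$ persistence for $\mathcal{S}$ comes from the standard Cauchy stability theory for quasilinear hyperbolic systems rather than from the conditional estimates of the theorems of Section 3.1 (which presuppose an $L^1_tL^\infty$ bound on ${}^{(\bo T)}\pi$ not yet available at this stage), and that the paper makes explicit the causal-curve estimate showing $p_t$ remains in the domain of dependence of the chart for $\abs{t}\lesssim r_h(p)/\nu$, which your unit-scale reduction handles implicitly.
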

\begin{thm}
Suppose moreover that a scalar density $f \in H^3(S)$ is given on $S$ satisfying $\nu^{-1}\leq f \leq \nu$ for a $\nu>1$, then {\bf M} can be foliated by a time function $t$ satisfying $\square_{\bo g} t =0$, $t=0$  and $(-{\bo g}({\bo D}t,{\bo D}t))^{-\demi}=n_0=f\sqrt{det(g_0)}$ on $S$ in a neigborhood of $S$ in ${\bo M}$.
\end{thm}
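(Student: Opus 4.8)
The plan is to obtain $t$ by solving the \emph{linear} scalar wave equation $\square_{\bo g} t = 0$ on the development $({\bo M},{\bo g})$ furnished by the previous theorem, with Cauchy data on $S$ tuned so that the induced lapse is exactly $n_0$. The key observation is that prescribing the normal derivative of $t$ on $S$ is the same as prescribing the lapse. Since we demand $S=\{t=0\}$, the gradient $\bo{D}t$ is normal to $S$ at every point of $S$; writing $\bo{D}t|_S = \mu\,{\N}$ with ${\N}$ the future unit normal to $S$, one gets ${\bo g}(\bo{D}t,\bo{D}t)=-\mu^2$ and ${\N}(t)=-\mu$, so the lapse is $(-{\bo g}(\bo{D}t,\bo{D}t))^{-\demi}=|\mu|^{-1}=|{\N}(t)|^{-1}$. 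Hence I would impose the future-increasing Cauchy data
\begin{equation*}
t|_S = 0, \qquad {\N}(t)|_S = \frac{1}{n_0} = \frac{1}{f\sqrt{\det(g_0)}},
\end{equation*}
which forces the lapse to equal $n_0$ on $S$. The bounds $\nu^{-1}\ls f \ls \nu$, together with the control on $g_0$ coming from the curvature hypotheses and assumption {\bf (G)}, guarantee that $1/n_0$ is bounded above and below and lies in $H^3(S)$.

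Next I would invoke the standard well-posedness theory for the scalar wave equation on the rough background $({\bo M},{\bo g})$. The control $\norm{R}_{L^2(S)}+\norm{\nabla R}_{L^2(S)}+\norm{\nabla^2 R}_{L^2(S)}\ls\alpha$ places the metric in Sobolev class $H^4$, so the coefficients of $\square_{\bo g}$ are regular enough for energy estimates (with Moser-type product bounds) to close at the $H^3$ level. This yields a unique solution $t\in\mathcal{C}([0,T],H^3)$ in a neighborhood of $S$, and finite propagation speed lets one localize the construction to $\De$. By the Sobolev embedding $H^3\hookrightarrow\mathcal{C}^1$ in three spatial dimensions, both $\partial_t t$ and the spatial gradient of $t$ are continuous, so $\bo{D}t$ is continuous on the neighborhood.

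It then remains to check that $t$ is a genuine time function. On $S$ one has ${\bo g}(\bo{D}t,\bo{D}t)=-1/n_0^2<0$, bounded away from zero uniformly by the bounds on $f$ and on $\det(g_0)$; since $\bo{D}t$ is continuous, it stays timelike on a neighborhood of $S$, so the level sets of $t$ foliate this neighborhood by spacelike hypersurfaces, and $t$ increases to the future because ${\N}(t)|_S>0$. By construction $\square_{\bo g} t=0$, $t|_S=0$ and the lapse equals $n_0$ on $S$. Passing to transported coordinates adapted to the normal flow of $t$ (which is orthogonal to the slices, hence transversal), the harmonicity condition $\square_{\bo g} t=0$ is exactly equivalent to the relation $n=\sqrt{g}\,f$ with $f$ time-independent, so the foliation produced is precisely the desired transversal time-harmonic one.

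The main obstacle is the low regularity of the background metric: one must verify carefully that the $H^2$-control on the curvature (hence $H^4$ on ${\bo g}$) is indeed enough to solve $\square_{\bo g} t=0$ with a solution whose Sobolev regularity embeds into $\mathcal{C}^1$, since this is what makes the level sets honest spacelike slices rather than merely formal ones. A secondary point is obtaining a \emph{uniform} lower bound on the size of the neighborhood on which $t$ is defined and timelike, which should be tied to $\nu$ and to the temporal-extent estimate $T^{+}(p)\geq T(\alpha,r_h^{3,2}(p))$ of the preceding theorem so that the foliated neighborhood genuinely contains $S$.
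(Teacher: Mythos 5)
Your argument is correct in outline, but it takes a genuinely different route from the paper's. You impose the gauge \emph{a posteriori}: on the development $(\mathbf{M},\mathbf{g})$ already furnished by the preceding theorem you solve the \emph{linear} scalar wave equation $\square_{\mathbf{g}}t=0$ with Cauchy data $t|_{S}=0$ and normal derivative $1/n_{0}$ (your identification of the normal derivative of $t$ with the reciprocal of the lapse is indeed the right way to tune the data), and you then propagate the timelike character of $\mathbf{D}t$ by continuity. The paper never solves this linear equation. There the gauge is built into the construction of the spacetime itself: in the proof of the local existence theorem one solves the reduced quasilinear system $\mathcal{S}$ in transported coordinates with zero shift and with lapse $n=\sqrt{g}f$, and the coordinate time of that system is automatically harmonic because the condition $\square_{\mathbf{g}}t=0$ is equivalent to $n=\sqrt{g}f$ with $f$ time-independent, a relation propagated by the evolution equation $\partial_{0}\log n=-n\,\mathrm{tr}(k)$; the local solutions are then patched and identified with a neighborhood of $S$ in the MGHVE, so the present theorem falls out of the proof of the previous one. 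The paper's route buys quantitative control for free --- the two-sided bounds on $n$ and on the spectrum of $g$ over an explicit time interval, which feed into the temporal-extent estimate --- precisely because in this gauge the lapse is an algebraic function of $\det g$. Your route buys independence from the construction (it would foliate any sufficiently regular development in this gauge), at the cost of the two points you flag yourself: the low-regularity well-posedness of the linear equation on the $H^{3}$ background, and the uniformity of the neighborhood on which $\mathbf{D}t$ stays timelike. Since the statement only asks for \emph{some} neighborhood of $S$, neither point is a gap; but if one wants the foliation on a slab of uniform thickness, as the applications to the local breakdown criterion require, the construction-based argument is the one that delivers it.
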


{\it Remarks :}

\begin{itemize} 
\item contrary to standard local-in-time theorems, this version give an explicit bound on "how big" the MGHVE without compacity  or asymptotic flatness hypothesis . As there is no natural time in general relativity, the most natural gauge-invariant way to measure the "temporal size" is the one chosen here.
\item the gauge hypothesis of the second theorem could have been replaced by other gauge choices. It says roughly that given some datas, we know that a  neighborhood of $S$ in {\bf M} can be foliated.
\item the  important aspect of relation (38) is that it enables us not only to compare the "time of definition" of two solutions of Einstein vacuum equations corresponding to data sets given on the same manifold but also corresponding to two general abstract initial data sets $(S,g,k)$ and $(S',g',k')$. This "time of definition" depends on analytic and geometric quantities.
\end{itemize}

\begin{cor} 
If we suppose moreover that $S$ is complete and has radius of injectivity bounded by below by $i_0>0$ then for all $p \in S$, $r_h^{3,2}(p)\geq r(\norm{R}_{H^l(S)},i_0)>0$, which implies a uniform lower bound on the future temporal extent for the corresponding MGVHE. Instead of a lower bound on the radius of injectivity, we can require that the volume radius at scales less than one is uniformly bounded by below on $S$.

\end{cor}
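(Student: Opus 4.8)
The plan is to separate a purely Riemannian statement, namely a uniform lower bound on the harmonic radius, from its consequence for the temporal extent, which is then immediate from the preceding local existence and uniqueness theorem. Recall that $r_h^{3,2}(p)$ is the largest radius on which there is a harmonic coordinate system centered at $p$ in which $g_{ij}$ is $C^0$-close to $\de_{ij}$ and, after rescaling to unit size, has controlled $W^{3,2}$ norm. The structural fact driving everything is that in harmonic coordinates the Ricci tensor becomes elliptic in $g$,
\begin{equation*}
R_{ij}=-\demi g^{ab}\partial_a\partial_b g_{ij}+Q_{ij}(g,\partial g),
\end{equation*}
with $Q$ quadratic in $\partial g$; one reads the metric off as the solution of this elliptic equation with source $R_{ij}$.

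First I would establish the base regularity. Completeness together with the injectivity radius bound $\geq i_0$ and the curvature bound (note that $\norm{R}_{L^\infty}$ is controlled by $\norm{R}_{H^l(S)}$ via Sobolev embedding on a noncollapsed ball for $l$ large enough) puts us in the Jost--Karcher/Anderson setting: there are harmonic coordinates on a ball of radius comparable to $i_0$ in which $g\in C^{1,\al}$ with bounds depending only on $i_0$ and $\norm{R}_{H^l(S)}$. In particular $\partial g\in L^\infty$ and $g^{ab}$ is uniformly elliptic on this ball.

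Next I would bootstrap. With $\partial g\in L^\infty$ the coefficient $g^{ab}$ and the quadratic term $Q(g,\partial g)$ are controlled, so interior elliptic regularity applied to the displayed equation, using $R_{ij}\in H^{l}(S)$, upgrades $g$ to $W^{3,2}$ on a slightly smaller concentric ball, with bounds still depending only on $i_0$ and $\norm{R}_{H^l(S)}$. Unwinding the rescaling that defines the harmonic radius then yields
\begin{equation*}
r_h^{3,2}(p)\rs r(\norm{R}_{H^l(S)},i_0)>0
\end{equation*}
uniformly in $p$, which is the first assertion. Feeding this into $T^{+}(p)\rs \min(T^\star(\al),r_h^{3,2}(p)T^\star(\al))$ and using monotonicity of the right-hand side in $r_h^{3,2}$ gives the uniform lower bound $T^{+}(p)\rs \min(T^\star(\al),r\,T^\star(\al))>0$ for the MGHVE. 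For the volume-radius version I would replace the injectivity-radius hypothesis by the Cheeger--Gromov--Taylor/Anderson result that a lower bound on the volume radius at scales $\ls 1$, together with the curvature bound, again produces $C^{1,\al}$ harmonic coordinates on a definite ball; the bootstrap step above is then unchanged.

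The main obstacle will be making this bootstrap quantitatively scale-invariant: one must check that in dimension three the products $g^{ab}\partial^2 g$ and $\partial g\cdot\partial g$ close in the Sobolev spaces encountered along the iteration, so that the assumed regularity level $l$ of $R$ actually suffices, and that the constants in the interior elliptic estimate remain uniform in $p$ under the rescaling built into the definition of $r_h^{3,2}$. This uniformity, rather than any single estimate, is the delicate point.
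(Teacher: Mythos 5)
Your argument is correct and follows essentially the same route as the paper: a uniform lower bound on the $W^{3,2}$ harmonic radius deduced from the injectivity-radius (or volume-radius) and curvature hypotheses, then fed into the bound $T^{+}(p)\geq \min\bigl(T^\star(\al),\, r_h^{3,2}(p)T^\star(\al)\bigr)$ of the local existence theorem. The only difference is one of presentation: the paper black-boxes the harmonic-radius estimate as its Cheeger--Gromov theorem (Theorem 3.3.1, deferred to the reference [pe]), whereas you sketch the underlying Jost--Karcher/Anderson elliptic bootstrap for $R_{ij}$ in harmonic coordinates; the substance is the same.
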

 
\begin{cor}
Suppose that $S$ is open and that there exists $\al>0$, $D>0$, $d>0$ and $\eta>0$ and $p \in M$ such that :

\begin{itemize}
\item $d_g(p,\partial M) \geq d$
\item $vol(B(p,s)) \geq \eta s^3$ for all $s \leq d/2$
\item $\norm{\nabla^j R}_{L^2(M)}\leq Q$ for $0 \leq \abs{j}\leq l$
\end{itemize}
then $r_h^{2+l,2}(p) \geq r(\al,d,\eta) >0$, thus :
\begin{equation}
T^{+}(p)\geq T(Q,d,\eta)>0
\end{equation}

\end{cor}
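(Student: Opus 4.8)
The plan is to split the statement into two essentially independent pieces: a purely Riemannian lower bound on the harmonic radius at $p$, deduced from the three geometric hypotheses, followed by the insertion of that bound into the local existence estimate (38) of the previous theorem. The second step is a direct application; the first is where the genuine work lies.

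First I would localize everything to the interior ball $B(p,d/2)$, which is compactly contained in the interior of $S$ by the hypothesis $d_g(p,\partial M)\rs d$, so that no boundary effect interferes with the elliptic constructions below and the distance-to-boundary assumption is genuinely used. On this ball the curvature hypothesis $\norm{\nabla^j R}_{L^2(M)}\ls Q$ for $0\ls|j|\ls l$ is exactly a Sobolev bound $\norm{R}_{L^{l,2}(B(p,d/2))}\ls Q$, and since the Ricci tensor is a contraction of $R$ (and in dimension $3$ the full curvature is algebraically determined by $\mathrm{Ric}$), this controls $\mathrm{Ric}$ and its derivatives up to order $l$. Together with the non-collapsing $vol(B(p,s))\rs\eta s^3$ for $s\ls d/2$, these are precisely the hypotheses under which the $L^{2+l,2}$ harmonic radius admits a lower bound.

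Next I would establish $r_h^{2+l,2}(p)\rs r(Q,d,\eta)>0$ by the standard harmonic-radius argument adapted to the interior ball, in the spirit of Anderson and of Hebey--Herzlich. The structural input is that in $g$-harmonic coordinates the metric solves an elliptic system of schematic form $\Delta_g g_{ij}=-2\,\mathrm{Ric}_{ij}+\mathcal{Q}(g,\partial g)$, so that an $L^{l,2}$ control of $\mathrm{Ric}$ bootstraps, via interior elliptic estimates and the embedding $L^{2+l,2}(\mathbb{R}^3)\hookrightarrow C^0$, to an $L^{2+l,2}$ control of $g_{ij}$ once a harmonic chart of definite size is available. The existence of such a chart is guaranteed by the non-collapsing bound, and the decisive point is that in dimension $3$ the $L^2$ curvature hypothesis is \emph{subcritical} (the scaling exponent is $2p-n=1>0$): were $r_h^{2+l,2}(p)$ arbitrarily small, rescaling the corresponding harmonic ball to unit size would produce metrics whose curvature tends to $0$ in $L^{l,2}$ while the volume stays uniformly non-collapsed, so that a weak $L^{2+l,2}$ limit exists and is flat, hence has harmonic radius bounded below --- a contradiction. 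This yields the asserted $r(Q,d,\eta)$.

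Finally, for $l\rs1$ one has $2+l\rs3$, and since demanding a higher order of regularity can only shrink the harmonic radius, $r_h^{3,2}(p)\rs r_h^{2+l,2}(p)\rs r(Q,d,\eta)>0$. Feeding this into (38) and using the monotonicity of $T(\al,\cdot)$ in its second slot gives $T^{+}(p)\rs T(\al,r_h^{3,2}(p))\rs T(Q,d,\eta)>0$, which is the claim. I expect the main obstacle to be precisely the first geometric step in the incomplete setting: one must extract the non-collapse / injectivity-type control that makes a harmonic chart of definite size available from only an \emph{integral} curvature bound (rather than a pointwise sectional-curvature bound, for which Cheeger's classical estimate would apply directly), and one must check that the entire elliptic bootstrap can be confined to $B(p,d/2)$ so that the open manifold $S$ is never probed near $\partial M$.
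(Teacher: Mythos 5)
Your proposal follows essentially the same route as the paper: the corollary is obtained by combining a Cheeger--Gromov-type lower bound on the $L^{2+l,2}$ harmonic radius (Theorem 3.3.1, which the paper simply cites to the literature and whose Anderson-style blow-up proof you sketch correctly, including the subcriticality $2p-n=1>0$ of the $L^2$ curvature hypothesis in dimension $3$) with the temporal-extent estimate (38) of the local existence theorem, via $r_h^{3,2}(p)\geq r_h^{2+l,2}(p)$ and the monotonicity of $T^\star$. The only difference is that you supply the proof of the harmonic-radius input that the paper outsources to a reference.
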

 
The proof will require some results of Cheeger-Gromov theory.

\subsection{Some results from Cheeger-Gromov theory :}

Let us begin by the defining the harmonic $L^{2+l,2}$ radius of a point $p$ of a  Riemannian manifold $(M,g)$  denoted by $r^{2+l}_h(p)$ as the largest radius of a geodesic ball about $p$ on which there exists an harmonic chart in which :
\begin{eqnarray*}
\demi \delta_{ij} \leq g_{ij} &\leq& 2  \delta_{ij} \\
r_h(p)^{|j|-3/2}\norm{\partial^j g_s}_{L^2(B(p,r_h(p))} &\leq& 2 \text{ for every multiindex $ j, |j| \leq2+l$}.
\end{eqnarray*}
the first relation will be written below : $Spectrum(g)\in [1/2,2]$.

Let us now state the theorem :

\begin{thm}
Let $Q>0$,  $d>0$ and $\eta>0$ given, then there exists $r(Q,d,\eta)>0$ such that for any 3D-Riemannian manifold $(S,g)$ and $p \in M$ such that :

\begin{itemize}
\item $d_g(p,\partial M) \geq d$
\item $vol(B(p,s)) \geq \eta s^3$ for all $s \leq d/2$\item $\norm{\nabla^j R}_{L^2(M)}\leq Q$ for $0 \leq \abs{j}\leq l$
\end{itemize}
then $r_h^{2+l,2}(p) \geq r(Q,d,\eta) >0$

\end{thm}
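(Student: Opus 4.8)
The plan is to argue by contradiction using a blow-up argument together with Cheeger--Gromov compactness, exploiting that the curvature hypothesis is subcritical with respect to the natural parabolic scaling in dimension three. Suppose the conclusion fails for some fixed $Q,d,\eta$. Then there is a sequence of $3$-dimensional Riemannian manifolds $(S_i,g_i)$ and points $p_i$, all satisfying the three hypotheses with the same constants, for which $\rho_i := r_h^{2+l,2}(p_i) \to 0$. First I would rescale, setting $\tilde g_i = \rho_i^{-2} g_i$, so that the harmonic radius at $p_i$ becomes exactly $1$ (it scales linearly). The decisive gain is that under $\tilde g = \lambda^2 g$ with $\lambda = \rho_i^{-1}\to\infty$ one computes
\begin{equation*}
\norm{\nabla^j R}_{L^2(\tilde g_i)}^2 = \lambda^{-1-2j}\norm{\nabla^j R}_{L^2(g_i)}^2 = \rho_i^{\,1+2j}\norm{\nabla^j R}_{L^2(g_i)}^2 \ls \rho_i^{\,1+2j}\, Q^2 \to 0
\end{equation*}
for every $0\ls j\ls l$, so the rescaled manifolds become flat in the integral sense. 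Meanwhile the volume lower bound $vol(B(p,s))\rs \eta s^3$ is scale invariant, hence persists, and $d_{\tilde g_i}(p_i,\partial) = \rho_i^{-1} d_{g_i}(p_i,\partial)\rs \rho_i^{-1} d \to \infty$, so in the limit the boundary recedes to infinity.

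Next I would invoke the harmonic-radius compactness theorem of Cheeger--Gromov--Anderson type. The normalization $r_h^{2+l,2}(p_i)=1$ provides, in harmonic coordinates on the unit ball, uniform control of the metric components in $W^{2+l,2}$ with $Spectrum(\tilde g_i)\in[1/2,2]$; together with the non-collapsing furnished by the volume bound, this yields a subsequence converging in the weak $W^{2+l,2}$ topology (and strongly in all lower norms) on compact sets to a pointed limit $(S_\infty,g_\infty,p_\infty)$. Because $d_{\tilde g_i}(p_i,\partial)\to\infty$ and the limit is non-collapsed, $(S_\infty,g_\infty)$ is complete; and because the rescaled curvature norms tend to zero, $\nabla^j R_\infty = 0$ for all $j\ls l$, so in particular $R_\infty\equiv 0$ and $g_\infty$ is flat.

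I would then identify the limit explicitly and extract the contradiction. A complete flat $3$-manifold is a quotient of Euclidean space, and the scale-invariant volume lower bound $vol(B(p_\infty,s))\rs \eta s^3$ forces a uniform lower bound on the injectivity radius at $p_\infty$, so a neighbourhood of $p_\infty$ is isometric to a Euclidean ball. In Euclidean coordinates the metric is exactly $\delta_{ij}$, which are harmonic, whence $r_h^{2+l,2}(p_\infty)=+\infty$. On the other hand, the continuity of the harmonic radius under $W^{2+l,2}$ convergence gives $r_h^{2+l,2}(p_\infty)=\lim_i r_h^{2+l,2}(p_i)=1$, a contradiction. This yields the desired uniform lower bound $r(Q,d,\eta)>0$.

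The step I expect to be the main obstacle is the compactness together with the continuity of the harmonic radius: one must verify that the harmonic charts, with their uniform elliptic $W^{2+l,2}$ estimates coming from $\Delta_{\tilde g_i} x^a = 0$, can be passed to the limit so that $g_\infty$ is genuinely controlled at the top regularity $2+l$ and so that its harmonic radius equals $1$ rather than merely being bounded below by it. This rests on the standard but delicate elliptic regularity for the harmonic-coordinate equations and on the fact that, in dimension three, the $L^2$ bounds on $\nabla^j R$ lie strictly on the subcritical side of the scaling, which is exactly what makes the blow-up gain available. The non-collapsing hypothesis (equivalently the volume-radius bound invoked in the corollary) is what prevents the harmonic charts from degenerating and guarantees that the flat limit is Euclidean rather than a nontrivial flat quotient.
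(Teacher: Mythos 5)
The paper does not actually prove this statement: its ``proof'' is a one-line citation to Petersen's treatment of Cheeger--Gromov--Anderson theory, so there is no argument in the text to compare against. Your blow-up strategy is precisely the standard proof of the cited result, and your scaling computation $\norm{\nabla^j R}^2_{L^2(\tilde g)}=\lambda^{-1-2j}\norm{\nabla^j R}^2_{L^2(g)}$ is consistent with the paper's own rescaling bookkeeping (the factor $\max(\lambda^{-1/2},\lambda^{-5/2})$ appearing in the monotonicity argument immediately after the theorem). So in spirit you are reconstructing exactly the argument the paper delegates to the literature.

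There is, however, one genuine gap. Normalizing so that $r_h^{2+l,2}(p_i)=1$ after rescaling controls the harmonic radius only at the single point $p_i$; it gives you one good chart on $B(p_i,1)$ and nothing beyond it. Cheeger--Gromov compactness on balls $B(p_i,R)$ with $R\to\infty$ --- which is what you need to conclude that the limit is a \emph{complete} flat manifold with cubic volume growth at all scales, hence $\mathbb{R}^3$ with infinite harmonic radius --- requires a uniform lower bound on the harmonic radius at \emph{every} point of those balls. Without it, all you can extract is a limit on (slightly less than) the unit ball, and the flat Euclidean metric on a unit ball has harmonic radius exactly $1$ at its center (the harmonic radius is capped by the distance to the boundary), so no contradiction arises. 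The standard repair is a point-selection argument: replace $p_i$ by a point $q_i$ nearly minimizing $r_h(x)/\mathrm{dist}(x,\partial)$ (or $r_h(x)$ over a suitable ball), rescale by $r_h(q_i)^{-1}$, and observe that the near-minimality forces $r_h\geq 1/2$, say, at all points within distance $R_i\to\infty$ of $q_i$ in the rescaled metric. With that device inserted, the rest of your argument (non-collapsing passing to the limit, vanishing of the rescaled curvature norms, identification of the limit as $\mathbb{R}^3$ via the volume bound and the absence of free finite isometric actions, and the semicontinuity of the harmonic radius under $W^{2+l,2}$ convergence) goes through and gives the theorem.
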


\begin{proof}
It is a version a Cheeger-Gromov control on the harmonic radius of a manifold (see \cite{pe}).
\end{proof}

We remark that if denote by $r(Q,d,\eta)$ the largest $r$ satisfying the previous theorem, we have the following monotonicity properties :
\begin{itemize}
\item for any $(d,\eta)$, $r(.,d,\eta)$ is decreasing
\item for any $(Q,d)$, $r(Q,d,.)$ is increasing
\end{itemize}

Moreover we have that thee exists  $\mu(Q,\eta)>0$ such that  :
\begin{equation*}
r(Q,d,\eta) \geq min(\mu(Q,\eta),d\mu(Q,\eta))
\end{equation*}
\begin{proof}Fix $\lambda>0$, consider $(S,\lambda^2 g)$ and $p \in S$, denote by ${r}_h^{l+2,2}(p,\lambda)$ its $(l+2,2)$ harmonic radius on $(S,\lambda^2 g)$ and $r_h^{l+2,2}(p)$ its $(l+2,2)$ harmonic radius on $(S, g)$, then by scaling property of the harmonic radius, $r_h^{l+2,2}(p,\lambda)=\lambda r_h^{l+2,2}(p)$. But $r_h^{l+2,2}(p,\lambda)\geq r(Q(S,\lambda^2g),\lambda d,\eta)\geq  r(max(\lambda^{-1/2},\lambda^{-5/2})Q,\lambda d,\eta)$. Thus :
\begin{equation*}
r_h^{l+2,2}(p)\geq d r(max(d^{1/2},d^{5/2})Q,1,\eta)
\end{equation*}
denoting  $\mu(Q,r)=r(Q,1,\eta)$, for $d\leq1$, we have $r_h^{l+2,2}(p)\geq d \mu(Q,\eta)$. If $d\geq 1$, consider $(B(p,1),g)\subset (S,g)$, and $r''(p)$ the $(l+2,2)$ harmonic radius of $p$ on $(B(p,1),g)$, we have $r_h^{l+2,2}(p)\geq r''(p)\geq r(Q(B(p,1),g),1,\eta)\geq  r(Q,1,\eta)$.
\end{proof}

Thus (3.3.2) is a corollary of (3.2.1). We will now prove theorem (3.2.1).

\subsection{Proof of the local existence theorem :}

\hspace{5mm}We begin by using the harmonic coordinates to identify $B_g(p,r_h^{3,2}(p)) \subset {\bf M}$ with a $B'_g(p,r_h(p)) \subset\mathbb{R}^3$. Here $\mathbb{R}^3$ will be endowed with its standard Sobolev norms. Remark that thanks to the definition of the harmonic radius. 
\begin{eqnarray*}
\norm{g}_{H^3(B'_g(p,r_h(p)))}& \leq& C(r_h(p)) \\
\norm{k}_{H^2(B'_g(p,r_h(p)))}& \leq &C(r_h(p),\al)
\end{eqnarray*}

 We can extend smoothly on $\mathbb{R}^3$, $g$ and $k$ thus  constructing $(\tilde{g},\tilde{k})$ satisfying $\norm{\tilde{g}-e}_{H^3(\mathbb{R}^3)}\leq 2 C$, $\norm{\tilde{k}}_{H^{2}(\mathbb{R}^3)}\leq 2C$, where $e$ stands for the euclidian norm of $\mathbb{R}^3$. We can also suppose that $(\tilde{g},\tilde{k})=(e,0)$ outside  $B'_g(p,2r_h(p))$ and $Spectrum(\tilde{g})\subset [2/5,5/2]$. We do not require of course $(\tilde{g},\tilde{k})$ to satisfy the constraint equations outside $B'_g(p,r_h(p))$.

We then solve the hyperbolic system $\mathcal{S}$ with data $(\tilde{g},\tilde{k})$ for  a $ \nu^{-1}\leq n_0 \leq \nu$ given in $H^3$. Using the Cauchy stability theorem for this reduced system, we obtain the existence of a time $T_1(C,n_0)>0$ such that the solution of  $\mathcal{S}$ with initial data $(\tilde{g}, \tilde{k})$ exists on $\mathbb{R}^3\times [-T_1,T_1]$ and $(g(x,t)-e(x)) \in \mathcal{C}([0,T];H^{3}(\mathbb{R}^{3})) \cap \mathcal{C}^{1}([0,T];H^{2}(\mathbb{R}^{3}))$ depends continuously on the initial data. The propagation of the gauge condition ensures that on the domain of dependance of $B'_g(p,r_h(p))$ the previously constructed solution is a solution of the Einstein vacuum equations.

 Using the Cauchy stability theorem for the reduced system as  well as Sobolev embedding theorem, there exists $T_2(r_h(p),\al,n_0)>0$, such that for every time $ t \in [-T_2,T_2]$ and $x \in B'_g(p,r_h(p))$ :
 
 \begin{eqnarray}
 n(t,x) \in [\demi\nu^{-1},2\nu] \\
 \text{Spectrum }\tilde{g}(t,x) \subset [1/4,4]
 \end{eqnarray}

Let us now consider $p_t:=(p,t)$ and prove that there exists $T_3$ such that for $\abs{t} < T_3(r_h(p),\al)\leq T_2(r_h(p),\al)$, any causal  past-directed curve initiating at $p_t$ crosses $\mathbb{R}^3 \times \{ 0\}$ in $B'(p,r_h)$ that is $p_t$ belongs to the domain of dependance of $B'(p,r_h)$. 

To prove this property, let us parametrize a causal curve $\gamma(s)=(x(s),t-s)$ with $s \in [0,t]$ and $x(0)=p$. Using the fact that $\gamma$ is causal :
 \begin{equation*}
 \tilde{g}_{ij}(x,t)\dot{x}_i \dot{x}_j \leq 2\nu\\
\end{equation*}
but :

 \begin{equation*}
 \frac{1}{10}\tilde{g}_{ij}(x,0)\dot{x}_i \dot{x}_j \leq \frac{1}{4} \Sigma_i \abs{\dot{x}_i}^2 \leq \tilde{g}_{ij}(x,t)\dot{x}_i \dot{x}_j \leq 2\nu\\
\end{equation*}

integrating in $s$ we deduce $\gamma(t) \in B'(p,20\nu t)$. Thus taking :
\begin{equation}
T_3(r_h(p),\al,\nu)=min(T_2(r_h(p),\al,\nu),\frac{r_h(p)}{30 \nu })
\end{equation}
we deduce that for $\abs{t} < T_3(\al,r_h(p))\leq T_2(\al,r_h(p))$, any causal curve past-directed initiating in $p_t$ crosses $\mathbb{R}^3 \times \{ 0\}$ in $B'(p,r_h)$ that is $p_t$ belongs to the domain of dependance of $B'(p,r_h)$. We thus have :

\begin{equation}
T^{+}(p)\geq inf(n) T_3(r_h(p),\al,\nu)\geq T'_3(r_h(p),\al,\nu)>0.
\end{equation} 
 
  Once an abstract initial data set is given $(S,g,k)$, we can fix an arbitrary lapse and $(42)$ gives an explicit lower bound for the temporal extent of every point $p \in S$ depending only on the initial data set, the given lapse, and the harmonic radius at $p$. We patch the local solutions to obtain a GHVE $({\bf M,g})$ of $(S,g,k)$ foliated by an harmonic time function.
Now if we consider a initial data set $(S,g,k)$ for Einstein vacuum extension satisfying the conditions of theorem (3.2.1), the corresponding MGVHE is such that a neighborhood of $S$ is diffeomorphic to the previously constructed GHVE, theorems (3.2.1) and (3.3.2) follow.

\vspace{3mm}
Let us now consider $T^\star(\al,r)$ the largest time for which theorem 3.2.1 holds, we have the following monotonicity properties :
\begin{itemize}
\item $T^\star(.,r)$ is decreasing
\item  $T^\star(\al,.)$ is increasing
\end{itemize}
moreover there exists $T^\star_1(\al)>0$ s.t. :
\begin{equation*}
T^\star(\al,r)\geq min(T^\star_1(\al),r_h^{3,2}(p)T^\star_1(\al))
\end{equation*}
\begin{proof}
The MGVHE associated with $(S,\lambda^2g,\lambda k)$ for a fixed $\lambda>0$ is $({\bo M}, \lambda^2{\bo g})$ where $({\bo M}, {\bo g})$ stands for the MGVHE of $(S,g,k)$. Consider $p\in S$, its future temporal extent in $({\bo M}, \lambda^2{\bo g})$, $T^{+}_{\lambda}(p)=\lambda T^{+}(p)$, we thus deduce that :
\begin{equation*}
\forall \lambda>0, T^\star(\al,r)\geq \lambda^{-1}T^\star(max(\lambda^{-1/2},\lambda^{-5/2})\al,\lambda r)
\end{equation*}

Now let us define $T^\star_{1}(\al)=T^\star(\al,1)$, if $r_h^{3,2}\leq1$ we have :
\begin{equation*}
T^\star(\al,r)\geq r_h^{3,2}T^\star((r_h^{3,2})^{1/2}\al,1)\geq r_h^{3,2}T^\star_1(\al)
\end{equation*}
 If $r_h^{3,2}\geq1$, consider $T'(p)$ the future temporal extent of $p$ in the MGHVE of $(B(p,1),g,k)$, then $T^{+}(p)\geq T'(p)\geq T^\star(\al', 1)$ where $\al'<\al$ stands for the corresponding norm on $B(p,1)$, using the monotonicity of $T^\star$, $T^{+}(p)\geq T^\star_1(\al)$.
\end{proof}

Now consider an initial data set $(S,g,k)$ such that $diam(S)<+\infty$ and $(S,g,k)$ satisfies the conditions of corollary (3.2.2), we consider a sequence of points $p_n \rightarrow \partial S$, we have that there exists $T(Q,\nu)>0$, $T^{+}(p_n)\geq dist(p_n, \partial S)T(Q,\nu)$. We know that for non-complete initial data set, the temporal extent of the corresponding MGVHE typically tends to zero while approaching the boundary of $S$, the previous control shows that rate of decrease is at worst linear in the distance to the boundary. This phenomenon can be checked for instance in the case of the simplest example of initial data set for corollary 3.2.2  $(S,e,0)$, where $S$ is the open ball of radius $1$ in $ \mathbb{R}^3$, $e$ the euclidian metric of $ \mathbb{R}^3$. Its MGVHE is $({\bo C},{\bo m})$ where ${\bo C}=\{(x,y,z,t) \in \mathbb{R}^4 | x^2+y^2 +z^2 <(1-t)^2 \}$ and $m$ the Minkovsky metric. Let $p(x,y,z) \in S$, then $T^{+}(p)=\sqrt{1- x^2-y^2-z^2}\geq dist(p,\partial S)$.

 \subsection{Application to the local structure of Einstein vacuum spacetimes :}
 
\hspace{5mm}
Consider a spacetime {\bf (M,g)} supposed to be strongly causal and an achronal spacelike hypersurface  $\Omega_0$ satisfying the condition of local-in-time theorem and {\bf (G)}. The existence of such a local slice in the neighborhood of a point of a spacetime is not a strong requirement. Given a regular spacetime {\bf (M,g)}, a point $p$ and a future-directed normalized vector $\bo{T} \in T_p(\bo M)$, it's always possible to construct locally an achronal spacelike hypersurface  $\Omega_p$ such that $p \in\Omega_p$ and  {\bf T} is the normal to the hypersurface at $p$. It sufficient to remark that in a local coordinate chart $(t,x_1,x_2,x_3)$, we can construct this hypersurface as the graph of a function $t=t(x_1,x_2,x_3)$.
Let us consider a domain $\mathcal{D}$ of $\bf{M}$ globally hyperbolic with respect to $\Omega$,  foliated by a harmonic time-function with  $\Omega$ corresponding to $t=0$ and such that its boundary consists of three parts $\Omega_0$, $\Omega_{t_0}$ a spacelike hypersurface level set of $t$ and a null lateral boundary $\mathcal{H}$. We suppose the boundary of $\mathcal{D}$ to be Lipschitz regular.
\vspace{3mm}

 For a subset of $\Omega'$ of $\Omega$, we denote by $\Omega'_t$ the subset of {\bf (M,g)} obtained by following the integral lines of $\frac{\partial}{\partial t}=n{\bf T}$ during time $t$ and by $F_t : \Omega' \rightarrow \Omega'_t$ the corresponding map.

We suppose in this part that the initial data satisfies :

\begin{equation*}
a) \int_{\Omega} \abs{{\bf R}}^2(x)d\mu(x) \ls \Delta_0\\
\end{equation*}

\begin{thm}{\bf [Local theorem] }
Suppose that there exists  $0<\Delta_1<\infty$ s.t. :
\begin{eqnarray*}
&b)&\norm{^{\bf (T)}{\pi}}_{L^1_t(L^\infty(\Omega_t)} \ls \Delta_1 \\
&c)& \text{there exists a non-empty open set $\Omega' \subset \Omega_0$ s.t. $F_t(\Omega')\subset \Omega_t$, $\forall t\in[0,t_0[$}
\end{eqnarray*}
 
then the spacetime together with the harmonic time function can be extended  beyond time $t_0$ in the neighborhood of any point of $\Omega_{t_0}$. 

\end{thm}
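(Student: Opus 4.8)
The plan is to use hypothesis (b) as the continuation condition that propagates all the relevant regularity of $\bo{{}^{\bo (T)}{\pi}}$ and of the curvature up to the time $t_0$, and then to invoke the local existence theorem 3.2.1 at points of $\Omega_{t_0}$ to produce an extension past $t_0$. Accordingly, the proof reduces to verifying, on a neighborhood of a generic point $q \in \Omega_{t_0}$, the three hypotheses feeding into Corollary 3.2.2: uniform $L^2$ bounds on the curvature and its derivatives, a volume lower bound at small scales, and a lower bound on the distance to the boundary, which through the Cheeger--Gromov theorem 3.3.1 yields a lower bound on the harmonic radius $r_h^{3,2}(q)$.

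First I would run the energy estimates of Section 3.1. Hypothesis (b), together with the initial $L^2$ bounds on $\bo{{}^{\bo (T)}{\pi}}$ and $\bo D\,\bo{{}^{\bo (T)}{\pi}}$ on $\Omega_0$, puts us in the situation of Theorem 3.1.1, yielding $\bo{{}^{\bo (T)}{\pi}} \in L^\infty_t L^4(\Omega_t)$ and $\bo D\,\bo{{}^{\bo (T)}{\pi}} \in L^\infty_t L^2(\Omega_t)$ up to $t_0$, with bounds depending only on $\Delta_0$, $\Delta_1$ and the data. Bootstrapping through Theorem 3.1.2, I would obtain $\bo{D^l}\,\bo{{}^{\bo (T)}{\pi}} \in L^\infty_t L^2(\Omega_t)$ for $l\leq 3$, which in particular controls $P,\nabla P,\nabla^2 P,\nabla^3 P$ in $L^2$ as required by 3.2.1. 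The curvature hypotheses needed at each bootstrap step are supplied by a Bianchi-type energy estimate entirely parallel to the one proving Theorem 3.1.1: the energy flux through the null lateral boundary $\mathcal{H}$ is nonnegative by the positivity of the associated energy-momentum tensor, exactly as for $O(\bo T,\bo L)$ in that proof, while the bulk error terms, schematically $\bo{{}^{\bo (T)}{\pi}}\cdot\bo R\cdot \bo D\bo R$, are absorbed by Gronwall against $\norm{\bo{{}^{\bo (T)}{\pi}}}_{L^1_t L^\infty}$. This transfers the initial bound (a) into a uniform $L^2$ control of $\bo R,\bo D\bo R,\bo D^2\bo R$ on every slice $\Omega_t$, $t<t_0$, hence the constant $\alpha$ of 3.2.1.

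The delicate point is geometric: because the slices $\Omega_t$ are incomplete and shrink toward $\mathcal{H}$, the harmonic radius could a priori collapse as $t\to t_0$, and this is exactly where hypothesis (c) enters. The open set $\Omega'$ with $F_t(\Omega')\subset\Omega_t$ for all $t<t_0$ produces an interior tube $\bigcup_{t<t_0}F_t(\Omega')$ that never touches $\mathcal{H}$; for $q$ in the closure of $F_{t_0}(\Omega')$ this yields a uniform lower bound $d_g(q,\partial\Omega_{t_0})\geq d>0$. Given a generic $q\in\Omega_{t_0}$, one applies this to a small neighborhood of the past integral curve of $\frac{\partial}{\partial t}=n\bo T$ through $q$. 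The metric equivalence (G), together with the uniform bound $C^{-1}\leq n\leq C$ propagated from (24), then gives a noncollapsing volume bound $\mathrm{vol}(B(q,s))\geq\eta s^3$ at scales $s\leq d/2$. With these ingredients Theorem 3.3.1 furnishes $r_h^{3,2}(q)\geq r(\alpha,d,\eta)>0$.

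Finally, feeding $\alpha$ and the lower bound on $r_h^{3,2}(q)$ into Theorem 3.2.1, via Corollary 3.2.2, gives a uniform positive future temporal extent $T^{+}(q)\geq T(\alpha,d,\eta)>0$ at each such $q$, together with the harmonic foliation extending past $t=t_0$ by the foliation theorem. Patching these local extensions over $q\in\Omega_{t_0}$ yields the desired extension of the spacetime and of the harmonic time function beyond $t_0$. The main obstacle throughout is the third step: ruling out the degeneration of the slice geometry near the lateral boundary. Condition (c) is precisely what prevents this by isolating a region that remains uniformly interior up to time $t_0$, and without it the harmonic radius, hence the temporal extent guaranteed by 3.2.1, could vanish in the limit.
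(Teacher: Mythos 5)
Your overall architecture matches the paper's: an energy lemma giving uniform $H^2\times H^3$ control of $(\bo R,P)$ on the slices, a geometric lemma giving uniform ellipticity of $g_t$ and a volume lower bound at scales proportional to the distance to the boundary, and then the quantitative local existence theorem (through the Cheeger--Gromov bound on $r_h^{3,2}$) to produce a uniform future temporal extent. The treatment of hypothesis $(c)$ and the volume/harmonic-radius step is essentially the paper's Lemma on geometric control. However, there is a genuine gap in your second paragraph, and it sits exactly where the real work of the paper (and of \cite{krb}, \cite{qw3}, \cite{kri}, \cite{krk}) lies. You claim the uniform $L^2$ bounds on $\bo R$, $\bo{DR}$, $\bo{D^2R}$ follow from ``a Bianchi-type energy estimate entirely parallel to Theorem 3.1.1,'' with bulk errors schematically ${}^{\bo (T)}{\pi}\cdot\bo R\cdot\bo{DR}$ absorbed by Gronwall against $\norm{{}^{\bo (T)}{\pi}}_{L^1_tL^\infty}$. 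This is not the structure of the curvature equations: $\square_{\bo g}\bo R=\bo R\star\bo R$ and $\square_{\bo g}\bo{DR}\approx\bo{DR}\cdot\bo R$, so the error terms are nonlinear in the curvature itself --- the paper's Theorems on the energy give precisely $\int_0^t\normf{\bo R}_{L^4(\Omega_s)}^4$ and $\int_0^t\normf{\bo R}_{L^\infty(\Omega_s)}^2\normf{\bo{DR}}_{L^2(\Omega_s)}^2$ on the right-hand side. These cannot be closed by Gronwall against the $L^1_tL^\infty$ norm of the deformation tensor; one needs pointwise (or $L^2_tL^\infty_x$) control of $\bo R$ itself. Obtaining that control is the heart of the breakdown-criterion machinery: a lower bound on the past null radius of injectivity (Klainerman--Rodnianski), control of the null Ricci coefficients of past null cones under the $L^1_tL^\infty$ flux hypothesis (Q.~Wang's theorem, which is where hypothesis $(b)$ and the reduced flux $\mathcal{F}(\pi,p,\delta)$ actually enter), and the Kirchhoff--Sobolev parametrix to convert the flux bounds into an $L^\infty$ bound on $\bo R$, all in a bootstrap. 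Your proposal skips this entirely, so the constant $\al$ you feed into Theorem 3.2.1 is not actually established.

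A secondary, repairable difference: you apply the local existence theorem at points $q\in\Omega_{t_0}$, which presupposes that the data $(g,k)$ extends to the limit slice with the stated regularity; this requires an additional limiting argument. The paper instead applies the local existence theorem on the slices $\Omega_t$ for $t<t_0$, where the uniform bounds are already available, and observes that the proper time separating $\Omega_t$ from $\Omega_{t_0}$ is at most a constant times $(t_0-t)$ by the uniform bound on the lapse coming from (24); since the guaranteed temporal extent $T(p)$ of the development of $(\Omega_t,g_t,k_t)$ is bounded below independently of $t$, taking $t$ close enough to $t_0$ makes the local development overshoot $\Omega_{t_0}$ in a neighborhood of $p_t$. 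This comparison of a uniform lower bound with a shrinking gap is cleaner than extending the data to the boundary slice, and you should adopt it.
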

 
Let us make some remarks :
\begin{itemize}
\item $c)$ is a non-crushing condition. 
\item The other conditions are similar to the global criterion in the form proved by Q.Wang (see \cite{qw2} and \cite{qw3}), but localized into $\mathcal{D}$.
\item the choice of $\mathcal{D}$ as an appropriate set to localize the breakdown criteria is natural in the sense that if we consider a point $p \in \Dep$ and $\mathcal{J}^-(p)$ its causal past than $\mathcal{J}^-(p)$ is divided into $\mathcal{J}^-(p) \cap \Dep$ and $\mathcal{J}^-(p) \cap \mathcal{J}^-(\Omega)$ thus in particular arguments on the causal structure of  $\mathcal{N}^-(p)$, energy arguments which are local remain valid as we will see later on. 
\item using (24), we deduce that there exits $\Delta_3>1$ such that $\Delta_3^{-1}<n<\Delta_3^{-1}$ on $\mathcal{D}$.
\item condition $b)$ can be replaced by a uniform bound on the Riemann tensor which is a formally stronger assumption.
\end{itemize}

\subsection{Main Lemmas :}
Let us denote by $\Omega'_t=F^t(\Omega')$. The previous theorem will be proved using the following lemmas :

\begin{lem} {\bf [Geometric control]}
The family $\Omega'_t$ satisfy the following properties : there exists $s_0<1$, $C>0$ and $D>0$ such that for all  $t\in [0,t_0[$ and all $x \in \Omega'_t$ such that :
\begin{eqnarray*}
&a)&  \text{the metric } g_t \text{ satisfy the geometric assumption {\bf (G)} uniformly for } \\
&&
t \in [0,t_1[ \\
&b)&  vol_{g_i}B(x,s)\rs C s^3 \text{ for } s\ls s_0 dist(x,\partial \Omega'_t) \\
\end{eqnarray*}
\end{lem}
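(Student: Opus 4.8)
The plan is to establish the two properties by a continuity/bootstrap argument starting from the initial slice $\Omega'$ and propagating along the flow $F_t$. The key observation is that $\Omega'_t$ is obtained by flowing $\Omega'$ along $\frac{\partial}{\partial t}=n\bo{T}$, and the geometric quantities evolve according to the evolution equations $\partial_t g_{ij}=-2nk_{ij}$. Since by hypothesis $b)$ of the local theorem we control $\norm{^{\bf (T)}{\pi}}_{L^1_t(L^\infty(\Omega_t))}\ls \Delta_1$, and since $k$ is (schematically) a component of $^{\bf (T)}{\pi}$, we have an integrated-in-time $L^\infty$ bound on $k$. The remark following the local theorem gives us $\Delta_3^{-1}<n<\Delta_3$ on $\mathcal{D}$ as a consequence of (24).

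For property $a)$, the idea is to compare $g_t$ with the initial metric $g_0$ along the flow. Integrating $\partial_t g_{ij}=-2nk_{ij}$ from $0$ to $t$ gives
\begin{equation*}
g_{ij}(t,x)-g_{ij}(0,x)=-2\int_0^t n\, k_{ij}(s,x)\,ds,
\end{equation*}
so that $\abs{g_{ij}(t)-g_{ij}(0)}\ls 2\Delta_3\int_0^t\norm{k}_{L^\infty(\Omega_s)}\,ds\ls 2\Delta_3\Delta_1$. Since $\Omega'\subset\Omega_0$ satisfies $\bf{(G)}$ with some constant $C_0$, the spectral bounds $C_0^{-1}\abs{\xi}^2\ls g_{ij}(0)\xi_i\xi_j\ls C_0\abs{\xi}^2$ are preserved up to a uniform deformation controlled by $\Delta_1$ and $\Delta_3$. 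First I would make this precise by noting that the $L^1_tL^\infty$ bound lets us absorb the perturbation into a renormalized constant $C$ uniform in $t\in[0,t_1[$, which yields $\bf{(G)}$ uniformly. Technically one should work in the harmonic coordinates of the local existence theorem and use that the flow preserves the coordinate patch structure via condition $c)$.

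For property $b)$, the volume lower bound, I would combine the uniform metric equivalence just obtained with a Euclidean-type volume comparison. Since $\bf{(G)}$ holds uniformly, the $g_t$-metric is uniformly comparable to the Euclidean metric in the fixed chart, so $g_t$-balls are comparable to Euclidean balls and $vol_{g_t}B(x,s)\rs C^{-3/2}\,vol_{\text{eucl}}B(x,s)\gtrsim C^{-3/2}s^3$ as long as $B(x,s)$ stays inside the chart. The constraint $s\ls s_0\,dist(x,\partial\Omega'_t)$ ensures the ball does not escape the region where the comparison is valid, and the non-crushing condition $c)$ guarantees $\Omega'_t$ does not degenerate. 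I would choose $s_0<1$ small enough that the geodesic ball $B(x,s)$ is contained in a coordinate neighborhood on which the spectral bounds hold.

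The main obstacle I anticipate is controlling the geometry uniformly up to $t_1$ rather than just for short times: the deformation estimate above relies crucially on the $L^1_t$ (rather than merely pointwise-in-time) nature of the bound $b)$, which is exactly what prevents the constant in $\bf{(G)}$ from blowing up as $t\to t_1$. A subtler point is the interplay with the boundary: $\Omega'_t$ is non-complete and its temporal extent degenerates near $\partial\Omega'_t$, so the volume bound must be stated relative to $dist(x,\partial\Omega'_t)$ and one must verify that the harmonic-coordinate chart and the comparison radius scale correctly with this distance. Making the volume comparison uniform in $x$ while respecting this boundary degeneration — essentially converting the pointwise metric equivalence into a genuine Bishop-Gromov-type lower bound adapted to the flowed, non-complete slices — is where the real care is required.
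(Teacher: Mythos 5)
Your overall strategy is the same as the paper's: propagate the ellipticity of $g$ along the flow using $\partial_t g_{ij}=-2nk_{ij}$ together with the $L^1_tL^\infty$ bound on $^{\bf (T)}{\pi}$ and the uniform bounds on $n$, then deduce the volume lower bound by comparison with Euclidean balls in the uniformly elliptic chart. However, the way you propagate the ellipticity has a genuine flaw. Integrating the evolution equation directly gives the \emph{additive} bound $\abs{g_{ij}(t)-g_{ij}(0)}\ls 2\Delta_3\Delta_1$, and even this presupposes that you can bound the coordinate components $\abs{k_{ij}}$ by the tensor norm $\abs{k}_g$ that the hypothesis actually controls, which already requires the bound on $g$ you are trying to establish --- the estimate is circular as written. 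More importantly, an additive perturbation of size $2\Delta_3\Delta_1$ does \emph{not} preserve the lower spectral bound $C_0^{-1}\abs{\xi}^2\ls g_{ij}\xi_i\xi_j$ unless $2\Delta_3\Delta_1<C_0^{-1}$; for large $\Delta_1$ your ``renormalized constant'' is vacuous and positivity of $g_t$ is lost. The paper avoids both problems by applying Gronwall to the quadratic form: since $\abs{k_{ij}\xi^i\xi^j}\leq\abs{k}_g\,g_{ij}\xi^i\xi^j$, one obtains the \emph{multiplicative} two-sided bound with constant $C'=C\exp\bigl(\int_0^{t_1}\norm{nk}_{L^\infty(\Omega'_t)}dt\bigr)$, which is finite for any value of $\Delta_1$ and immediately yields {\bf (G)} uniformly, together with the fact that $F_t$ is a quasi-isometry with constant independent of $t$.

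For part b) your argument matches the paper's: once the matrices $g_t$ are uniformly comparable to $\delta$, a $g_t$-ball of radius $a$ contains a Euclidean ball of radius $C''a$ and $vol_{g_t}B(x,a)\geq\int_{B_e(x,C''a)}\sqrt{\det g_t}\,dx\gtrsim a^3$; the paper bounds $\det g_t$ from below via $\partial_t\log(\det g)=-2n\,trk$, but the spectral bound you derive serves equally well. Bishop--Gromov is not needed (and gives comparisons in the wrong direction here); the estimate is a purely fixed-chart Euclidean comparison, so your closing worry about a curvature-comparison argument is a red herring once the Gronwall step above is repaired.
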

\begin{lem}{\bf [Energy control]}
The $\Omega'_t$ satisfy the following condition for all  $t\in [0,t_1[$  :
\begin{eqnarray*}
&&(R_t,P_t) \text{ is uniformly bounded in the } H^{2}(\Omega'_{t})\times H^{3}(\Omega'_{t})\text{ topology by a constant}\\
&&C(\Delta_i,t_0)\norm{{\bf R}}_{H^2(\Omega)} .\\
\end{eqnarray*}
\end{lem}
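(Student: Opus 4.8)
The plan is to reduce the two stated bounds to $L^\infty_t L^2(\Omega'_t)$ control of the spacetime curvature $\bo{R}$ and of $^{\bo (T)}{\pi}$ together with their covariant derivatives, and then to invoke the energy theorems of Subsection~3.1. First I would record the algebraic dictionary relating the quantities in the statement to $^{\bo (T)}{\pi}$. Since $P^{ij}=k^{ij}-trk\,g^{ij}$ and $k$ is (up to sign) the spatial projection of $^{\bo (T)}{\pi}$, bounding $P_t$ in $H^3(\Omega'_t)$ is equivalent to bounding $\bo{D}^l{}^{\bo (T)}{\pi}$ in $L^2(\Omega'_t)$ for $l\leq 3$, once the intrinsic derivatives $\nabla$ on the slice are traded for spacetime derivatives $\bo{D}$ at the cost of extra factors of $^{\bo (T)}{\pi}$, which are absorbed using the $L^1_tL^\infty$ bound $\Delta_1$ and the lapse bounds $\Delta_3^{-1}<n<\Delta_3$ on $\mathcal{D}$. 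By the Gauss equation $R\approx\bo{R}+k^2$, bounding the intrinsic curvature $R_t$ in $H^2(\Omega'_t)$ reduces to bounding $\bo{D}^l\bo{R}$ in $L^2(\Omega'_t)$ for $l\leq 2$, the quadratic term $k^2$ being handled by product estimates from the $^{\bo (T)}{\pi}$-bounds just described.

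The $^{\bo (T)}{\pi}$-part is then a direct application of the higher-order energy theorem for $^{\bo (T)}{\pi}$ proved above in Subsection~3.1 with $k=3$: its hypotheses are $^{\bo (T)}{\pi}\in L^1_tL^\infty(\Omega_t)$ (hypothesis $b)$ of the Local theorem, $\leq\Delta_1$), the initial bounds $\bo{D}^l{}^{\bo (T)}{\pi}(t=0)\in L^2(\Omega)$ for $l\leq 3$, and the curvature bounds $\bo{D}^l\bo{R}\in L^\infty_tL^2(\Omega'_t)$ for $l\leq 2$. The initial bounds on $^{\bo (T)}{\pi}$ and its derivatives on $\Omega$ are themselves controlled by $\norm{\bo{R}}_{H^2(\Omega)}$: the Gauss--Codazzi (constraint) relations express $\nabla k$ in terms of $\bo{R}$, so $\norm{\bo{R}}_{H^2(\Omega)}$ dominates $\norm{\nabla^{\leq 3}P}_{L^2(\Omega)}$ up to lower-order products, while $n=\sqrt{g}f$ and the relation (24) control the normal derivatives. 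Thus the one missing ingredient is the family $\bo{D}^l\bo{R}\in L^\infty_tL^2(\Omega'_t)$, $l\leq 2$.

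I would establish these curvature bounds by the localized energy method already used in Subsection~3.1, now applied to the Bianchi system satisfied by $\bo{R}$ and to its successive covariant derivatives, via the Bel--Robinson tensor $Q$ of $\bo{D}^l\bo{R}$. Integrating $\bo{D}^\al Q_{\al\bet\ga\de}\bo{T}^\bet\bo{T}^\ga\bo{T}^\de$ over the slab between $\Omega$ and $\Omega'_t$ and discarding the flux through the null lateral boundary $\mathcal{H}$ by its positivity --- the Bel--Robinson analogue of $O(\bo{T},\bo{L})\geq 0$ --- yields an inequality of the schematic form
\begin{equation*}
\normf{\bo{D}^{\leq 2}\bo{R}}_{L^2(\Omega'_t)}^2\lesssim\norm{\bo{R}}_{H^2(\Omega)}^2+\int_0^t\normf{{}^{\bo (T)}{\pi}}_{L^\infty(\Omega'_s)}\Big(\normf{\bo{D}^{\leq 2}\bo{R}}_{L^2(\Omega'_s)}^2+\normf{\bo{D}^{\leq 3}{}^{\bo (T)}{\pi}}_{L^2(\Omega'_s)}^2\Big)n\,ds,
\end{equation*}
because every error term produced by commuting $\bo{D}^l$ through the Bianchi equations carries at least one factor of $^{\bo (T)}{\pi}$ or of $\bo{D}^{\leq l}{}^{\bo (T)}{\pi}$. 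I would close this by induction on $l$: at order $l$ the required bound $\bo{D}^{\leq l}{}^{\bo (T)}{\pi}\in L^\infty_tL^2$ is supplied by the higher-order energy theorem with $k=l$, whose hypothesis $\bo{D}^{<l}\bo{R}\in L^\infty_tL^2$ is exactly the previous induction step, and Gronwall's lemma applies since the only time weight is $\normf{{}^{\bo (T)}{\pi}}_{L^\infty(\Omega'_t)}\in L^1_t$ with integral $\leq\Delta_1$. This produces the uniform-in-$t$ bound by $C(\Delta_i,t_0)\norm{\bo{R}}_{H^2(\Omega)}$.

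The main obstacle is the non-completeness of the slices $\Omega'_t$: the integrations by parts in both the Bel--Robinson and the $O$-energy identities generate boundary contributions on $\mathcal{H}$, and the whole argument rests on these having a favourable sign, which is precisely the positivity of the corresponding null fluxes. This is where hypothesis $c)$ --- the non-crushing condition $F_t(\Omega')\subset\Omega_t$ ensuring that the domains nest along the integral curves of $n\bo{T}$ --- is indispensable, since it guarantees that $\mathcal{H}$ is the correct null boundary for which the flux has the good sign. The secondary difficulty is structural bookkeeping: one must verify that the top-order commutators and the error terms obtained from differentiating the Bianchi system twice, and the wave equation for $^{\bo (T)}{\pi}$ three times, never leave an uncontrolled factor $\bo{D}^{3}\bo{R}$ or $\bo{D}^{4}{}^{\bo (T)}{\pi}$, so that the interlaced \emph{induction} genuinely closes at the stated regularity $H^2\times H^3$.
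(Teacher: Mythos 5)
Your reduction of the lemma to $L^\infty_tL^2$ bounds on $\bo{D}^{\leq 3}{}^{\bo (T)}{\pi}$ and $\bo{D}^{\leq 2}\bo{R}$, and your use of the hyperbolic energy theorems of Subsection~3.1 to recover the $^{\bo (T)}{\pi}$-estimates from the curvature estimates (in place of the elliptic Hodge-system/lapse-equation argument of the CMC and AF gauges), is exactly the route the paper takes. The gap is in how you propose to obtain the curvature bounds themselves. Your claim that ``every error term produced by commuting $\bo{D}^l$ through the Bianchi equations carries at least one factor of $^{\bo (T)}{\pi}$'' is false beyond order zero: the wave equation satisfied by the curvature is $\square_{\bo g}\bo{R}=\bo{R}\star\bo{R}$, with nonlinearity quadratic in $\bo{R}$ itself, so the energy identities for $\bo{DR}$ and $\bo{D}^2\bo{R}$ produce error terms of the form $\int_0^t\normf{\bo{R}}_{L^4(\Omega_s)}^4$ and $\int_0^t\normf{\bo{R}}_{L^\infty(\Omega_s)}^2\normf{\bo{DR}}_{L^2(\Omega_s)}^2$ (these are precisely the right-hand sides of the paper's two energy theorems for $\bo{DR}$ and $\bo{D^2R}$). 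These cannot be absorbed by Gronwall using only $\normf{{}^{\bo (T)}{\pi}}_{L^\infty(\Omega_t)}\in L^1_t$; your interlaced induction does not close.

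What is missing is the central mechanism of the breakdown-criterion argument: an a priori pointwise (or at least $L^4$/$L^\infty$) bound on $\bo{R}$, which is obtained not from energy estimates but from a Kirchoff--Sobolev type parametrix representing $\bo{R}(p)$ as an integral over the past null cone $\mathcal{N}^-(p)$. Making that representation usable requires a uniform lower bound on the past null radius of injectivity, $i^-(p)\geq\min(i_0,d_{\bo g}(p,\Omega))$, and quantitative control of the null Ricci coefficients ($tr\chi$, $\hat\chi$, $\zeta$, the mass aspect function $\mu$, the null lapse $a$) of these cones, assuming only the reduced curvature flux and the flux of $^{\bo (T)}{\pi}$ on $\mathcal{N}^-(p,\delta)$; these two controls are established together by a bootstrap (the theorems of Q.~Wang and of Klainerman--Rodnianski quoted in the paper), and this is where the hypothesis $^{\bo (T)}{\pi}\in L^1_tL^\infty$ actually does its work. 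Only once the pointwise bound on $\bo{R}$ is in hand do the Bel--Robinson and wave-equation energy estimates close and feed the hyperbolic $^{\bo (T)}{\pi}$-estimates as you describe. Your proposal omits this entire layer, which is the bulk of the paper's proof sketch and of the references it leans on.
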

\subsubsection{Proof of lemma (3.6.1) :}
\begin{proof}
Let us begin with the Lemma 1 :

\begin{equation}
\partial_t g_{ij}=-2nk_{ij}
\end{equation}

Consider a point $p$ of $\mathcal{D}$, following the integral lines of $-\partial_t$ and using the global hyperbolicity of $\mathcal{D}$, the integral line intersects $\Omega$ at a point $p_0$. Denoting by $C$ the constant of the hypothesis {\bf (G)}, we get thanks to the use Gronwall's lemma that {\bf (G)} is satisfied uniformly on the slides $\Omega'_t$ for a constant $C'=Cexp(\int_0^{t_1}\norm{nk}_{L^\infty(\Omega^m_t)}dt )$ that is as matrices :
\begin{equation*}
C'^{-1}\delta \leq g_t \leq C' \delta
\end{equation*}
 We thus obtain that $F_t : \Omega' \rightarrow \Omega'_t$ is a quasi-isometry with constant not depending on $t \in [0,t_0[$, more precisely, there exists $J>1$ such that for every $(p,q) \in \Omega'$ and every $t \in [0,t_0[$ :

\begin{equation*}
J^{-1}d_{g_t}(F^t(p),F^t(q)) \leq d_{g_0}(p, q)  \leq J d_{g_t}(F^t(p),F^t(q))
\end{equation*}
To control  the volume radius at fixed scales, we first remark that there exists $C''$ such that for all $x$, for all $t$, $x \in \Omega'_t$,
\begin{equation*}
B_{g_t}(x,a)\supset B_{e}(x,C''a)
\end{equation*} 
moreover $\partial_t log(detg)=-2ntr(k)$, thus :
\begin{equation*}
Vol(B_{g_t}(x,a)) \geq \int_{ B_{e}(x,C''a)}\sqrt{det(g_t)}dx \geq J(\Delta_1,\Delta_2,C'')a^3
\end{equation*}
\end{proof}
\subsubsection{Proof of lemma (3.6.2) :}

\hspace{5mm}
We only sketch in this part the proof of the energy lemma as its proof is essentially the same as the one given in \cite{krb}. We insist only on the differences related to the use of the time-harmonic transversal gauge and the local energy estimates. We refer the reader to this paper for the details of the proof.

 Let us consider a point $p \in \mathcal{D}$, the local geometry near the vertex of the past null cone $\mathcal{N}^{-}(p)$ can be controlled exactly as in the global in space case. We can prove exactly as in the global case a bound from below for the null radius of injectivity of  $\mathcal{N}^{-}(p)$ assuming only a bound of a reduced $L^2$ flux of the curvature along the null cones. implying a pointwise control on the Riemann curvature tensor as well as a $H^4 \times H^3$ control on $(g,k)$ as in \cite{krb} and \cite{dp}. This requires sophisticated techniques which require to consider from the one hand the Bel-Robinson tensor and from the other hand the non-linear wave equation satisfied by {\bf R} and its covariant derivative together with control on the causal geometry of null cones (see \cite{krb}, \cite{krk}, \cite{krc}, \cite{krg}).  This control depends only on the constants $\Delta_i$ and $t_0$. We refer the reader to the previously cited article for details. To be more precise :
\begin{thm}
Let $\bo{R}$ be the Riemann curvature tensor of a solution of the EVE and satisfying the hypothesis $a), b) $ and $c)$ of the local theorem, then for all t in $[0,t_0[$ and $p\in \mathcal{D}$ :

\begin{eqnarray}
&a)& \int_{\Omega_{t}} \abs{\bo{R}}^2(x)d\mu(x) \ls C_1(\Delta_i,t_0)\\
&b)& \int_{\mathcal{N}^{-}(p)\cap \mathcal{J}^{+}(\Omega)} \bo{Q[R]}(\bo{T},\bo{T},\bo{T},L) \ls C_2(\Delta_i,t_0)
 \end{eqnarray}

where $\mathcal{N}^{-}(p)$ stands for the past null cone initiating at the point p that is the boundary of the causal past of $p$ and the integral is taken in time between $0$ and $t_{1}$ and $L$ for the null geodesic generator of $\mathcal{N}^{-}(p)$, that is the vectorfield on $\mathcal{N}^{-}(p)$ satisfying :
\begin{equation}
 \bo{D}_{\bo{L}} \bo{L}=0 \hs <\bo{L},\bo{L}>=0
\end{equation}
and the normalization condition at $p$ :
$$<\bo{L},\bo{T}>(p)=1$$
\end{thm}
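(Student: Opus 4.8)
The plan is to run the standard Bel--Robinson energy estimate of Klainerman--Rodnianski, but localized to the domain $\mathcal{D}$, exploiting the positivity of the curvature flux through the null boundaries so as to discard (with the correct sign) the contributions we cannot control directly. Recall that in vacuum the Riemann tensor $\bo{R}$ coincides with the Weyl tensor and satisfies the Bianchi equations $\bo{D}^\alpha \bo{R}_{\alpha\beta\gamma\delta}=0$; consequently the Bel--Robinson tensor $\bo{Q}[\bo{R}]$ is symmetric, trace-free, divergence-free, and satisfies the dominant energy condition $\bo{Q}(\bo{X}_1,\bo{X}_2,\bo{X}_3,\bo{X}_4)\rs 0$ for future-directed causal $\bo{X}_i$. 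The two algebraic facts I would use throughout are $\bo{Q}(\bo{T},\bo{T},\bo{T},\bo{T})\approx \abs{\bo{R}}^2$ and the non-negativity of the null flux $\bo{Q}(\bo{T},\bo{T},\bo{T},\bo{L})\rs0$.

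For part $a)$, first I would introduce the momentum vector $P_\alpha=\bo{Q}_{\alpha\beta\gamma\delta}\bo{T}^\beta\bo{T}^\gamma\bo{T}^\delta$ and apply the divergence theorem over the region $\mathcal{D}\cap\mathcal{J}^-(\Omega_t)$, whose boundary consists of the initial piece of $\Omega$, the later slice $\Omega_t$, and the null lateral boundary $\mathcal{H}$. The slice terms produce $\int_{\Omega_t}\bo{Q}(\bo{T},\bo{T},\bo{T},\bo{T})$ and $\int_{\Omega}\bo{Q}(\bo{T},\bo{T},\bo{T},\bo{T})$, while the lateral term $\int_{\mathcal{H}}\bo{Q}(\bo{T},\bo{T},\bo{T},\bo{L})$ is non-negative and may be dropped. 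Since $\bo{Q}$ is divergence-free, the bulk term reduces by symmetry to $\bo{D}^\alpha P_\alpha = 3\,\bo{Q}_{\alpha\beta\gamma\delta}(\bo{D}^\alpha\bo{T}^\beta)\bo{T}^\gamma\bo{T}^\delta$, which is schematically $\bo{Q}\cdot{}^{\bo(T)}{\pi}\approx\abs{\bo{R}}^2\,\abs{{}^{\bo(T)}{\pi}}$. Setting $E(t)=\int_{\Omega_t}\abs{\bo{R}}^2$, the identity becomes $E(t)\ls E(0)+C\int_0^t\norm{{}^{\bo(T)}{\pi}}_{L^\infty(\Omega_s)}E(s)\,n\,ds$, and Gronwall together with hypotheses $a)$, $b)$ and the lapse bound $n\ls\Delta_3$ yields $E(t)\ls\Delta_0\exp(C\Delta_3\Delta_1)=:C_1(\Delta_i,t_0)$.

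For part $b)$ I would repeat the divergence theorem, this time over the truncated causal past $\mathcal{N}^-(p)\cap\mathcal{J}^+(\Omega)$, whose boundary is the past null cone $\mathcal{N}^-(p)$ together with the slice piece it cuts out of $\Omega$. Here the flux $\int_{\mathcal{N}^-(p)}\bo{Q}(\bo{T},\bo{T},\bo{T},\bo{L})$ is precisely the quantity to be bounded and appears with a favorable sign, so it is controlled by the initial energy on $\Omega$ (bounded by $\Delta_0$) plus the same bulk term, now already estimated by part $a)$. The causal decomposition $\mathcal{J}^-(p)=(\mathcal{J}^-(p)\cap\mathcal{D})\cup(\mathcal{J}^-(p)\cap\mathcal{J}^-(\Omega))$ noted in the remarks guarantees that the relevant portion of the cone lies in the controlled region $\mathcal{D}$, so the bulk integrand is supported where $E$ and $\norm{{}^{\bo(T)}{\pi}}_{L^\infty}$ are already bounded.

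The main obstacle is not the energy identity itself but the geometric bookkeeping of the localization: verifying that the divergence theorem applies legitimately on domains with null, Lipschitz boundaries, that the null generators $\bo{L}$ of $\mathcal{H}$ and of $\mathcal{N}^-(p)$ are consistently future-directed so that every boundary flux carries the sign dictated by the dominant energy condition, and that the non-crushing condition $c)$ keeps the flowed domains $\Omega'_t$ nondegenerate so that the Gronwall argument closes uniformly on $[0,t_0[$. I expect the remainder---the explicit form of the deformation-tensor contractions and the passage from the schematic estimate to the Gronwall inequality---to be routine, and indeed \cite{krb} handles these lower-order terms, so I would only emphasize the sign of the flux terms and the role of hypothesis $b)$ in making $\int_0^{t_0}\norm{{}^{\bo(T)}{\pi}}_{L^\infty(\Omega_s)}\,n\,ds$ finite.
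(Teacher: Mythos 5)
Your proposal follows essentially the same route as the paper, which does not write out this estimate in detail but explicitly defers to the Bel--Robinson energy argument of \cite{krb}, modified only by working on the localized domain $\mathcal{D}$ and discarding the non-negative curvature flux through the lateral null boundary --- precisely the divergence-theorem-plus-Gronwall scheme you describe, with hypothesis $b)$ supplying the $L^1_tL^\infty_x$ control of ${}^{\bo(T)}\pi$ needed to close the Gronwall inequality. The points you flag as requiring care (Lipschitz null boundaries, consistent future orientation of the null generators, the causal decomposition of $\mathcal{J}^-(p)$) are exactly the localization issues the paper identifies in its remarks, so no gap is apparent.
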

 
Moreover for higher order derivatives, we use the wave equation for {\bf R} :
\begin{equation*}
\square_{\bo g}{\bf R} = {\bf R}\star{\bf R}
\end{equation*} 
 
where $\star$ is a bilinear symmetric operator on the four-times covariant tensor. Its covariant derivative satisfies  :
\begin{equation*}
\square_{\bf g} {\bf D}{\bf R} \approx {\bf DR}\cdot {\bf R}
\end{equation*}
 
 using  local energy estimates for  wave equations :
 
\begin{thm}
Under the hypothesis of the local theorem, there exist a constant $C=C(t_0,\Delta_i)$ such that for all $t$ in the time slab $[0,t_1[$ :

$$\normf{\bo{DR}}_{L^2(\Omega_{t})}^2\leqslant C \big(\normf{\bo{DR}}_{L^2(\Omega_{0})}^2 +  \int_0^t\normf{\bo{R}}_{L^4(\Omega_{s})}^4nds\big)$$
\end{thm}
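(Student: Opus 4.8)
The plan is to run the same stress-energy argument that proved Theorem 3.1.1, but now applied to the field $\bo R$ itself rather than to $^{\bo (T)}{\pi}$. The key observation is that the curvature satisfies $\square_{\bo g}\bo R=\bo R\star\bo R$, and the natural energy of this wave equation, i.e. its flux across the slice $\Omega_t$, is exactly $\int_{\Omega_t}\abs{\bo{DR}}^2$, which is the left-hand side we want. Concretely I would introduce the energy-momentum tensor $Q_{\al\bet}$ associated to $\bo R$, built precisely as $O_{\al\bet}$ was built from $^{\bo (T)}{\pi}$ (contracting the free tensor slots of $\bo{D}\bo R$ against the fixed positive-definite metrics $h$ and subtracting the Lagrangian trace). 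It enjoys $Q(\bo T,\bo T)=\demi\abs{\bo{DR}}^2$, giving the energy density on the slices, together with the dominant-energy positivity $Q(\bo{S_1},\bo{S_2})\geq 0$ for every future causal pair. Consequently the flux $\int_{\mathcal{H}}Q(\bo T,\bo L)\,d\mu_{\mathcal{H}}$ through the null lateral boundary is nonnegative and may simply be discarded, exactly as before.

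Next I would compute the divergence $\bo{D}^\al Q_{\al 0}$. Three kinds of terms arise. From the source $\bo R\star\bo R$ we obtain $\abs{\bo R}^2\abs{\bo{DR}}$; from differentiating the contraction metrics and from commuting $\bo D$ past $\bo D\bo R$ (the curvature commutators) we get further terms of the same order $\abs{\bo R}^2\abs{\bo{DR}}$; and from the splitting $\bo{D}^\al(Q_{\al\bet}\bo T^\bet)=(\bo{D}^\al Q_{\al\bet})\bo T^\bet+Q_{\al\bet}\bo{D}^\al\bo T^\bet$ the second piece contributes $\abs{^{\bo (T)}{\pi}}\abs{\bo{DR}}^2$, since $\bo{D}\bo T\sim{^{\bo (T)}{\pi}}$. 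Altogether $\abs{\bo{D}^\al(Q_{\al 0})}\lesssim\abs{\bo R}^2\abs{\bo{DR}}+\abs{^{\bo (T)}{\pi}}\abs{\bo{DR}}^2$, with crucially no term forcing control on $\bo{D}^2\bo R$.

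I would then apply Stokes' formula on the past domain $\mathcal{I}$ of $\Omega_t$, using the non-crushing condition $c)$ and the nesting $\Omega_s=\Sigma_s\cap\mathcal{I}$ to identify the spatial slices along the flow $F_t$. Writing $A_t=\int_{\Omega_t}\abs{\bo{DR}}^2-\int_{\Omega_0}\abs{\bo{DR}}^2$ and dropping the nonnegative null flux, the bulk error is controlled by Young's inequality $\abs{\bo R}^2\abs{\bo{DR}}\leq\demi\abs{\bo R}^4+\demi\abs{\bo{DR}}^2$, which yields
$$\normf{\bo{DR}}_{L^2(\Omega_t)}^2\leq\normf{\bo{DR}}_{L^2(\Omega_0)}^2+\int_0^t\normf{\bo R}_{L^4(\Omega_s)}^4 n\,ds+\int_0^t\big(1+\normf{^{\bo (T)}{\pi}}_{L^\infty(\Omega_s)}\big)\normf{\bo{DR}}_{L^2(\Omega_s)}^2 n\,ds.$$

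Finally I would close by Gronwall's lemma with weight $(1+\normf{^{\bo (T)}{\pi}}_{L^\infty(\Omega_s)})n$. This weight is integrable on $[0,t_0[$ because $n\leq\Delta_3$ (the remark following the local theorem) and $\normf{^{\bo (T)}{\pi}}_{L^1_tL^\infty}\leq\Delta_1$ by hypothesis $b)$, so the resulting Gronwall constant is precisely of the announced form $C(t_0,\Delta_i)$. The main obstacle I anticipate is not the Gronwall step but the geometric bookkeeping of the energy identity on the domain with Lipschitz null lateral boundary: one must verify that the $\Omega_s$ foliate the past domain with the non-crushing property, that the flux through $\mathcal{H}$ genuinely carries the good sign coming from the causal character of $\bo L$, and that every curved-space correction in $\bo{D}^\al Q_{\al 0}$ is either quadratic in $\bo R$ times a single derivative (absorbable by Young) or proportional to $^{\bo (T)}{\pi}$ (absorbable by the integrable Gronwall weight).
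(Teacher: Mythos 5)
Your proposal is correct and follows essentially the route the paper intends: the paper derives this estimate by applying the local energy-estimate machinery of Theorem 3.1.1 (energy-momentum tensor, Stokes on the past domain with nonnegative null flux through $\mathcal{H}$, then Gronwall) to the wave equation $\square_{\bo g}{\bf R}={\bf R}\star{\bf R}$, which is exactly what you do. Your explicit bookkeeping of the divergence terms, the Young inequality $\abs{\bo R}^2\abs{\bo{DR}}\leq\demi\abs{\bo R}^4+\demi\abs{\bo{DR}}^2$, and the integrable Gronwall weight $(1+\normf{^{\bo (T)}{\pi}}_{L^\infty})n$ supplies details the paper leaves implicit, but the argument is the same.
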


\begin{thm}
With the hypothesis of the local theorem, there exist a constant $C'=C'(t_0,\Delta_i)$ such that for all $t$ in the time slab $[0,t_0[$ :

$$\normf{\bo{D^2R}}_{L^2(\Omega_{t})}^2\leqslant C' \big(\normf{\bo{D^2R}}_{L^2(\Omega_{0})}^2 +  \int_0^t\normf{\bo{R}}_{L^\infty(\Omega_{s})}^2\normf{\bo{DR}}_{L^2(\Omega_{s})}^2nds\big)$$
\end{thm}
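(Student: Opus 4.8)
The plan is to repeat, one derivative higher, the Bel--Robinson-type energy argument already used for $^{\bo (T)}\pi$ (the proof built on the energy--momentum tensor $O_{\al\bet}$), now applied to the wave equation for $\bo{DR}$. First I would record the equation $\square_{\bo g}\bo{DR}\approx\bo{DR}\cdot\bo R$ displayed above; it comes from $\square_{\bo g}\bo R=\bo R\star\bo R$ by commuting one covariant derivative through $\square_{\bo g}$, the commutator $[\square_{\bo g},\bo D]$ contributing only curvature terms of the schematic form $\bo R\cdot\bo{DR}$. The essential observation is that to control $\normf{\bo{D^2 R}}_{L^2}$ I treat $\bo{DR}$ itself as the unknown field and $\bo{D^2 R}=\bo D(\bo{DR})$ as its gradient, so no second commutation is required: the source that enters the energy identity is simply $\square_{\bo g}\bo{DR}$.

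Next I would introduce the energy--momentum tensor $O_{\al\bet}$ associated with $\bo{DR}$, defined exactly as in the earlier proof with the superfluous tensor indices contracted through the auxiliary positive-definite metric $h$. It satisfies the energy density identity $O(\bo T,\bo T)=\demi\abs{\bo{D^2 R}}^2$, the positivity $O(\bo S_1,\bo S_2)\rs 0$ for every pair of future causal $(\bo S_1,\bo S_2)$, and the divergence bound $\abs{\bo D^\al O_{\al 0}}\lesssim\abs{\bo{D^2 R}}\,\abs{\square_{\bo g}\bo{DR}}\lesssim\abs{\bo{D^2 R}}\,\abs{\bo{DR}}\,\abs{\bo R}$. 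Integrating $\bo D^\al O_{\al 0}$ over the truncated region $\mathcal I\cap\{0\ls t'\ls t\}$ and applying Stokes' formula produces the slice terms on $\Omega_0$ and $\Omega_t$ together with the flux $\int_{\mathcal H}O(\bo T,\bo L)\,d\mu_{\mathcal H}$ across the lateral null boundary. Since $\bo L$ is a future null generator of $\mathcal H$, this flux is nonnegative and may be discarded, leaving $\normf{\bo{D^2 R}}_{L^2(\Omega_t)}^2\ls\normf{\bo{D^2 R}}_{L^2(\Omega_0)}^2+\int_0^t\int_{\Omega_s}\abs{\bo{D^2 R}}\abs{\bo{DR}}\abs{\bo R}\,n\,d\mu_s\,ds$.

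To close the estimate I would bound the source on each slice by Hölder, $\int_{\Omega_s}\abs{\bo{D^2 R}}\abs{\bo{DR}}\abs{\bo R}\ls\normf{\bo R}_{L^\infty(\Omega_s)}\normf{\bo{D^2 R}}_{L^2(\Omega_s)}\normf{\bo{DR}}_{L^2(\Omega_s)}$, and then split it by Young's inequality into $\demi\normf{\bo{D^2 R}}_{L^2(\Omega_s)}^2+\demi\normf{\bo R}_{L^\infty(\Omega_s)}^2\normf{\bo{DR}}_{L^2(\Omega_s)}^2$. The first piece, integrated against the lapse $n$ (which is bounded by $\Delta_3$), is absorbed by Gronwall's lemma and only produces a multiplicative factor depending on $t_0$ and $\Delta_3$; the second piece is exactly the advertised right-hand side. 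This yields the stated inequality with $C'=C'(t_0,\Delta_i)$, in direct parallel with the previous theorem where the same scheme was run with $\phi=\bo R$ and source $\bo R\star\bo R$.

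The main obstacle is the localization: one must be certain that the flux through the Lipschitz null boundary $\mathcal H$ has the right sign so that it can be dropped, and this rests entirely on the positivity of $O$ for future causal pairs together with the future-directedness of $\bo L$. A secondary technical point is checking that the commutator $[\square_{\bo g},\bo D]\bo R$ genuinely reduces to the schematic form $\bo R\cdot\bo{DR}$ with no uncontrolled top-order contribution, so that the divergence of $O$ is truly cubic of the stated type; this is where the wave equation $\square_{\bo g}\bo R=\bo R\star\bo R$ and the second Bianchi identity are used.
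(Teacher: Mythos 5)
Your proposal is correct and follows essentially the same route as the paper, which only sketches this step: after displaying $\square_{\bo g}\bo{DR}\approx \bo{DR}\cdot\bo{R}$ the paper invokes ``local energy estimates for wave equations,'' i.e.\ precisely the Bel--Robinson-type energy--momentum tensor argument with the nonnegative flux through the null lateral boundary discarded, Hölder with $\normf{\bo R}_{L^\infty(\Omega_s)}$ pulled out, and Gronwall to close --- exactly as in the earlier theorem for $^{\bo (T)}\pi$ that you model your argument on.
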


As in the global case, our problem is thus to control the pointwise norm of ${\bf R}$. Such a control is obtained by proving that there exists a control on the past null radius of injectivity of $p$, that is there exists $i_0>0$ such that $\forall p \in \mathcal{D}$, $i^{-}(p)\geq min(d(p,\Omega), i_0)$. Such a property as well as the parametrix of \cite{krk} which are local can be used as in \cite{krb} to close the estimate and obtain a uniform $H^2(\Omega'_t)$ control on {\bf R}. It remains however to derive control on $k$ and $n$ from the control on ${\bo R}$. The method used previously for a CMC or an AF gauge is from the one hand a three-dimensional Hodge system for $k$ and form the other hand the lapse equation (see \cite{krb}) :

\begin{eqnarray*}
\nabla^i k_{ij}&=&0\\
\nabla \wedge k&=&E\\
trk&=&0\text{ (AF gauge), } trk=t \text{ (CMC gauge)} \\
\Delta n &=& n\abs{k}^2 - \epsilon 
\end{eqnarray*}
where $\epsilon=0$ in the case of the AF gauge, and  $\epsilon=1$ in the case of a CMC gauge and $E$ stands for the electric part of the electric-magnetic decomposition of {\bf R}.
On a manifold without boundary we have using integration by part that if ${\bf R}\in L^2(\Omega)$, $\nabla k$ and $\nabla^2 n \in L^2(\Omega)$. Moreover, the time derivatives can be controlled using the evolution equation for $k$, obtaining $\partial_t k \in L^2(\Omega)$ and $\nabla \partial_tn\in L^2(\Omega)$.
This approach is not well-suited for local in space and time application due to the ellipticity of the estimates used.

 Instead of these techniques , we use hyperbolic estimates of theorem (3.1.1) to deduce  that :
\begin{thm}
 Consider a spacetime foliated by a time-harmonic function $t$, and such that ${\bo {D^l R}} \in L^\infty_tL^2_x(\Omega_t)$ for $\abs{l} \leq 2$ and $ ^{\bf(T)}{\pi} \in L^1_tL^\infty(\Omega_t)$, then
 \begin{eqnarray*}
 {\bf {D^l}}   ^{\bf(T)}{\pi} &\in &L^\infty_t L^2_x(\Omega_t) \text{, $\abs{l} \leq 3$} \\
 {\bf {D^l}}  R &\in& L^\infty_tL^2_x(\Omega_t) \text{, $\abs{l} \leq 2$}
 \end{eqnarray*}
 \end{thm}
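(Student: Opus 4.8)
The statement is exactly the local, three-derivative case of Theorem (3.1.2): with $k=3$ its curvature hypothesis asks for ${\bo D}^l{\bo R}\in L^\infty_tL^2(\Omega_t)$ only for $l\le k-1=2$, which is precisely what is assumed here. So the plan is to run the energy scheme of Theorem (3.1.1) inductively on the number $j=0,1,2$ of derivatives falling on $^{\bo (T)}{\pi}$, each pass gaining one order. I would start from the cubic wave equation (31) for $^{\bo (T)}{\pi}$ and commute ${\bo D}^{j}$ through $\square_{\bo g}$. The commutator $[\square_{\bo g},{\bo D}^{j}]$ produces terms of schematic type $\sum_{a+b=j-1}{\bo D}^a{\bo R}\cdot{\bo D}^{b+1}\,{^{\bo (T)}{\pi}}$, while differentiating the right-hand side of (31) yields $\sum_{a+b=j}{\bo D}^a{\bo R}\cdot{\bo D}^{b}\,{^{\bo (T)}{\pi}}$ together with the differentiated cubic and the $\,{^{\bo (T)}{\pi}}\cdot{\bo D}\,{^{\bo (T)}{\pi}}$ self-interactions. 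The decisive bookkeeping point is that in every such term the curvature carries at most $j$ derivatives, so for the three passes $j\le 2$ only ${\bo D}^l{\bo R}$ with $l\le 2$ ever appears.

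At the base level $j=0$ I would quote Theorem (3.1.1) directly: it gives $^{\bo (T)}{\pi}\in L^\infty_tL^4(\Omega_t)$ and ${\bo D}\,{^{\bo (T)}{\pi}}\in L^\infty_tL^2(\Omega_t)$, and on the bounded slices $\Omega_t$ the former embeds into $L^\infty_tL^2$. For the inductive step I set $\psi_j={\bo D}^{j}\,{^{\bo (T)}{\pi}}$, build the energy--momentum tensor $O[\psi_j]$ as in the proof of Theorem (3.1.1), and apply Stokes on the truncated past domain. Discarding the sign-definite lateral flux $\int_{\mathcal H}O({\bo T},{\bo L})\,d\mu_{\mathcal H}\ge 0$ and using the nesting $F^u_s(\Omega_s)\subset\Omega_u$ exactly as there, I obtain $\normf{{\bo D}^{j+1}\,{^{\bo (T)}{\pi}}}_{L^\infty_tL^2(\Omega_t)}^2\lesssim\normf{{\bo D}^{j+1}\,{^{\bo (T)}{\pi}}}_{L^2(\Omega_0)}^2+\int_0^t\normf{{^{\bo (T)}{\pi}}}_{L^\infty(\Omega_s)}\,G_j(s)\,n\,ds$, where $G_j$ collects the squared $L^2$ norms of ${\bo D}^l{\bo R}$ with $l\le j$ (bounded by hypothesis) and of ${\bo D}^l\,{^{\bo (T)}{\pi}}$ with $l\le j+1$ (bounded by the induction, the top order $j+1$ being reabsorbed). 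As in the ``no $a_{02}Y^2$ term'' discussion for the toy model, the gauge structure forces every nonlinearity to carry at least one low-order factor of $^{\bo (T)}{\pi}$ that can be placed in $L^\infty$, so the only a-priori-uncontrolled weight under the time integral is $\normf{{^{\bo (T)}{\pi}}}_{L^\infty(\Omega_s)}\in L^1_t$. Gronwall then closes the level-$j$ estimate and feeds ${\bo D}^{j+1}\,{^{\bo (T)}{\pi}}\in L^\infty_tL^2$ into the next pass; iterating through $j=2$ yields ${\bo D}^l\,{^{\bo (T)}{\pi}}\in L^\infty_tL^2(\Omega_t)$ for $l\le 3$.

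The only remaining ingredient is the initial bound $\normf{{\bo D}^l\,{^{\bo (T)}{\pi}}}_{L^2(\Omega_0)}<\infty$ for $l\le 3$. Since $^{\bo (T)}{\pi}$ is assembled from $k$ (equivalently $P$) and $\nabla\log n$, and since in this gauge $n=\sqrt{g}f$ with $f\in H^3$, the $H^3$ regularity of the data set supplies the tangential derivatives, while the constraint and evolution equations (the lapse equation (24) and $\partial_t g=-2nk$, $\square_{\bo g}P=nM$) convert the normal derivatives into tangential and curvature data already controlled; this is the standard propagation of the data regularity. The second assertion, ${\bo D}^l{\bo R}\in L^\infty_tL^2(\Omega_t)$ for $l\le 2$, is simply the hypothesis restated, recorded here to register the full $H^2\times H^3$ control of $({\bo R},\,{^{\bo (T)}{\pi}})$ that Lemma (3.6.2) needs.

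I expect the genuine obstacle to be the structural verification underlying the first two paragraphs: one has to check, term by term in the commuted equations, that (a) curvature never appears above order $j$, so that the assumed $l\le 2$ curvature control is not exceeded, and (b) no truly top-order quadratic self-interaction of $^{\bo (T)}{\pi}$ survives without an accompanying $L^\infty$-placeable factor. Both follow from the precise algebraic form of (31) and from the commutation identity (30) for ${\bo D}{\bo D}t$, but it is exactly these two facts that make an $L^1_tL^\infty$ bound on $^{\bo (T)}{\pi}$ alone --- rather than on ${\bo D}\,{^{\bo (T)}{\pi}}$ --- sufficient to close the hierarchy, so they warrant the closest attention.
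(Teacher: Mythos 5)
Your proposal follows the paper's own route: the paper proves this statement simply by invoking Theorems (3.1.1) and (3.1.2) --- derive wave equations for ${\bo D}^{j}\,{}^{\bo (T)}{\pi}$ and rerun the energy--flux--Gronwall scheme on the truncated past domain, noting that the hypothesis ${\bo D}^{l}{\bo R}\in L^\infty_tL^2$ for $l\le 2$ is exactly what the $k=3$ case of (3.1.2) requires --- which is precisely the induction you describe, in more detail than the paper itself supplies. The one point you state slightly too optimistically is that every differentiated nonlinearity carries an undifferentiated factor of ${}^{\bo (T)}{\pi}$ placeable in $L^\infty$: the mixed terms ${\bo D}^{a}\,{}^{\bo (T)}{\pi}\cdot{\bo D}^{b+1}\,{}^{\bo (T)}{\pi}$ with $a,b\ge 1$ instead require Sobolev/interpolation on the already-established lower-order bounds, but you correctly flag this structural verification as the step deserving closest attention, and the paper does not address it either.
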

   We thus avoid completely in this gauge the trace theorems from $\Sigma_t \rightarrow \mathcal{C}$ where $\mathcal{C}$ is a null cone and the associated loss of Sobolev regularity of $\demi$ and get a control on the whole flux of $^{(\bf T)}{\pi}$ on $\mathcal{C}$ and not only on the flux associated to the wave equation for $^{(4)}{k}$.   
   
   As in \cite{krb}, \cite{dp} and \cite{qw2}, we thus have to derive a $H^2(\Omega_t)$ control on ${\bo R}$. If instead of a $L^1_tL^{\infty}_x$ control on $^{\bo(T)}{\pi}$, we had only required a $L^2_tL^{\infty}_x$ control, the proof would have been similar to the one in \cite{dp}, however the stronger assumption made here requires to prove that the $L^1_tL^{\infty}_x$ bound is sufficient to control the geometry of past null cones. This require the use of the main result of \cite{qw3} whose statement requires some definitions. Note that in this paper the author gave a proof assuming a generic time foliation in comparison with \cite{krc} in which the authors studied only the case of geodesic foliation of a (troncated) null cone.
   
Let ${\bf (M,g)}$ be a smooth 3+1 Einstein vacuum spacetime foliated by $\Sigma_t$, the level hypersurfaces of a time function t with lapse function $n$. Consider an outgoing null hypersurface $\mathcal{H}= \cup_{0<t<1}S_t$ in $\bo{(M,g)}$ initiating from a point $p$, whose leaves are $S_t=\Sigma_t\cap\mathcal{H}$ and $t(p)=0$. Let us recall some definitions :
\begin{thm}{\bf [Ricci coefficients]}
The Ricci coefficients relative to an adapted null frame $(e_A,e_B,\bo{L},\bo{\Lb})$ are defined as follows :
\begin{eqnarray*}
\chi_{AB}=\langle \bo{D_A}\bo{L},e_B \rangle &\hspace{5mm}& \underline{\chi}_{AB}=\langle\bo{D_A}\bo{\Lb},e_B \rangle \\
 \xi_{A}=\frac{1}{2}\langle \bo{D_\bo{L}}\bo{L},e_A \rangle  &\hspace{5mm}& \underline{\xi}_{A}=\frac{1}{2}\langle \bo{D_{\bo{\Lb}}}\Lb,e_A \rangle\\
\h_A=\frac{1}{2}\langle \bo{D_{\bo{\Lb}}}\bo{L},e_A \rangle&\hspace{5mm}&  \underline{\h}_{A}=\frac{1}{2}\langle \bo{D_\bo{L}}\bo{\Lb},e_A\rangle \\
\omega =\frac{1}{4}\langle \bo{D_\bo{L}}\bo{L},\bo{\Lb} \rangle&\hspace{5mm}&\underline{\omega}=\frac{1}{4}\langle \bo{D_{\bo{\Lb}}}\bo{\Lb},\bo{L}\rangle \\
V_A=\frac{1}{2}\langle \bo{D_A}\bo{L},\bo{\Lb}\rangle
\end{eqnarray*}
We recall the definition of the null components of a Weyl Field $W$ relative to a null pair $(\bo{L},\bo{\Lb})$ :
\begin{eqnarray*}
&&\al(W)(X,Y)=W(X,\bo{L},Y,\bo{L})\\
&&\bet(W)(X)=\frac{1}{2}W(X,\bo{L},\bo{\Lb},\bo{L})\\
&&\rho(W)=\frac{1}{4}W(\bo{\Lb},\bo{L},\bo{\Lb},\bo{L})\\
&&\sigma(W)=\frac{1}{4}{}^{\star}W(\bo{\Lb},\bo{L},\bo{\Lb},\bo{L})\\
&&\underline{\al}(W)(X,Y)=W(X,\bo{\Lb},Y,\bo{\Lb})\\
&&\underline{\bet}(W)(X)=W(X,\bo{\Lb},\bo{\Lb},\bo{\Lb})
\end{eqnarray*}
denote moreover by :
\begin{eqnarray*}
a^{-1}&=&-<\bo{L}, \bo{T}> \\
\mu &=& -\demi \bo{D}_{\bo{\Lb}} tr \chi +\frac{a^2}{4}(tr(\chi))^2 - \underline{\omega} tr(\chi) \\
4 \pi r^2(t)&=& \int_{S_t} 1 d\mu_{S_t}
\end{eqnarray*}
\end{thm}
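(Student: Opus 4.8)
The final statement is not an assertion but a collection of definitions: it records the connection coefficients $\chi,\chib,\xi,\underline{\xi},\h,\hb,\omega,\underline{\omega},V$ of an adapted null frame, the null decomposition $\al,\bet,\rho,\sigma,\underline{\al},\underline{\bet}$ of a Weyl field, and the auxiliary scalars $a,\mu,r$. There is therefore nothing to prove in the usual sense; the only thing one should check before putting this list to use is that the frame is unambiguously fixed by the geometry and that every listed object is a well-defined scalar or $S_t$-tangent tensor. I would organise that verification as follows.

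First I would pin down the null pair along $\mathcal{H}=\cup_{0<t<1}S_t$. The outgoing generator $\bo{L}$ is determined by the three conditions $<\bo{L},\bo{L}>=0$, $\bo{D}_{\bo{L}}\bo{L}=0$, and the normalization $<\bo{L},\bo{T}>(p)=1$ at the vertex, which also fixes the affine parametrization; the conjugate $\bo{\Lb}$ is then the unique null field orthogonal to the leaves $S_t=\Sigma_t\cap\mathcal{H}$ with $<\bo{L},\bo{\Lb}>$ normalized to the standard value. The pair $(e_A,e_B)$ is an orthonormal frame of $TS_t$, which leaves exactly the residual $SO(2)$ freedom in the leaf plane. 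Each entry of the table is then either a manifestly rotation-covariant $S_t$-tensor ($\chi,\chib,\xi,\underline{\xi},\h,\hb,V$) or a genuine scalar ($\omega,\underline{\omega},a,\mu,r$), so the definitions are intrinsic; this is the whole content of ``well-defined'' here.

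Second, the properties one actually uses downstream would be recorded as remarks rather than proved in this block: that $\chi$ and $\chib$ are symmetric $S_t$-tensors (from torsion-freeness of $\bo{D}$ together with integrability of the leaf distribution), that $a^{-1}=-<\bo{L},\bo{T}>$ measures the defect of the null generator from the time foliation, and that $\mu=-\demi\bo{D}_{\bo{\Lb}}tr\chi+\frac{a^2}{4}(tr\chi)^2-\underline{\omega}\,tr\chi$ is the natural mass-aspect combination appearing in the transport equation for $tr\chi$. These identities, together with the null Bianchi equations tying $\al,\bet,\rho,\sigma,\underline{\al},\underline{\bet}$ to derivatives of the Ricci coefficients, are precisely the structure equations imported from \cite{qw3}.

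The point to emphasise is that there is no genuine obstacle, since the statement is purely definitional: the only ``hard part'' is notational bookkeeping, namely reconciling the sign and normalization conventions of \cite{qw3} with those fixed earlier in this paper, in particular the sign convention $k(X,Y)=-\bo{g}(D_X\bo{T},Y)$ and the vertex normalization $<\bo{L},\bo{T}>(p)=1$, so that the forthcoming control on the null radius of injectivity and on the geometry of $\mathcal{N}^{-}(p)$ can be quoted verbatim from the reference.
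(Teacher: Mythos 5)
You are right that this ``theorem'' is purely a list of definitions, and indeed the paper supplies no proof for it at all; your observation that the only real task is checking well-definedness and reconciling conventions with \cite{qw3} is consistent with how the paper treats it. Nothing further is needed.
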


moreover $\mu$ stands for the mass function, $a$ for the null lapse, $r$ for the radius of the level sets $\{ t=cte\}$ foliation of a null cone. For the definition of the appropriate functional spaces $\mathcal{P}$ and $\mathcal{B}$ see \cite{krc}.

More precisely :
\begin{thm} {\bf [Control on the null Ricci coefficients of null cones, Q.Wang, \cite{qw3}]}

Using the previously introduced notations, assume $C^{-1}<n<C$ on $\mathcal{H}$ for $C>0$ and :

\begin{equation*}
\mathcal{R}(\mathcal{H})+ \mathcal{F}(\pi)\leq \mathcal{R}_0, \text{ on $\mathcal{H}$}
\end{equation*}
with $ \mathcal{R}_0$ sufficiently small. Then the following estimates hold true :

\begin{eqnarray*}
\norm{tr\chi-\frac{2}{s}}_{L^{\infty}(\mathcal{H})}&\lesssim& \mathcal{R}^2_0 \\
 \abs{a-1}&\leq &\demi \\
\norm{\int_0^1\abs{\hat{\chi}}^2nadt}_{L^{\infty}_\omega}+ \norm{\int_0^1\abs{\zeta}^2nadt}_{L^{\infty}_\omega}&\lesssim& \mathcal{R}^2_0 \\
 \norm{\int_0^1\abs{\nu}^2nadt}_{L^{\infty}_\omega}+ \norm{\int_0^1\abs{\underline{\zeta}}^2nadt}_{L^{\infty}_\omega}&\lesssim &\mathcal{R}^2_0 \\
\norm{\nablab tr\chi}_{\mathcal{P}_0}+ \norm{\mu}_{\mathcal{P}_0}+ \norm{\nablab tr\chi}_{L^2{\mathcal{H}}}+  \norm{\mu}_{L^2{\mathcal{H}}} &\lesssim &\mathcal{R}_0 \\
\mathcal{N}_1(\hat{\chi})+ \mathcal{N}_1(\zeta)+  \mathcal{N}_1(tr\chi-\frac{2}{r})+\mathcal{N}_1(tr\chi-(an)^{-1}\overline{antr\chi})&\lesssim& \mathcal{R}_0 \\
\norm{tr\chi-\frac{2}{r}}_{L^2_tL^{\infty}_\omega}+\norm{tr\chi-(an)^{-1}\overline{antr\chi}}_{L^2_tL^{\infty}_\omega}&\lesssim &\mathcal{R}_0 \\
 \norm{\sup_{t\leq1}\abs{r^{3/2}\nablab tr \chi}}_{L^2_\omega}+\norm{\sup_{t\leq1}\abs{r^{3/2}\mu}}_{L^2_\omega}+ \norm{r^{1/2}\nablab tr \chi}_{\mathcal{B}_0}+ \norm{r^{1/2}\mu}_{\mathcal{B}_0}&\lesssim& \mathcal{R}_0 
\end{eqnarray*}
\end{thm}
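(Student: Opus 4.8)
The statement is a deep result of Q.\ Wang whose proof rests on a continuity (bootstrap) argument for the null structure equations of the cone $\Hyp$, run under the sole smallness of the scale-invariant flux norm $\mathcal{R}(\Hyp)+\mathcal{F}(\pi)\lesssim\mathcal{R}_0$. The plan is to introduce the bootstrap assumption that all the listed conclusions hold with the constants on their right-hand sides doubled, and to recover them with the stated constants. Since every norm involved is continuous in the affine parameter $s$ and degenerates to the flat value as $s\to0$ (near the vertex the cone is almost the Minkowski light cone, for which $tr\chi=\frac2s$, $a=1$, and all traceless/torsion coefficients vanish), the set of $s$ on which the improved bounds hold is open, closed and nonempty, hence all of $(0,1]$.

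First I would treat the scalars governed by transport equations along $\bo L$. The null expansion satisfies a Raychaudhuri equation $\frac{d}{ds}tr\chi+\demi(tr\chi)^2=-|\hat\chi|^2-(\text{lower order})$; comparing with the exact model $\frac{d}{ds}\frac2s+\demi(\frac2s)^2=0$ and integrating from the vertex, the flux-controlled quantity $\int_0^1|\hat\chi|^2nads$ yields $\norm{tr\chi-\frac2s}_{L^\infty(\Hyp)}\lesssim\mathcal{R}_0^2$, and, after comparing the area radius $r$ with $s$, the analogous bound with $\frac2r$. The null lapse $a=(-\langle\bo L,\bo T\rangle)^{-1}$ is controlled by its own transport equation, whose source is a lower-order Ricci coefficient, giving $|a-1|\ls\demi$ once $\mathcal{R}_0$ is small.

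Next I would close the estimates for the traceless and torsion parts $\hat\chi$, $\zeta$, $\underline\zeta$, $\nu$, where the transport equations must be coupled with the Codazzi (Hodge) systems on the leaves $S_s$: schematically $\mbox{div}\,\hat\chi=\demi\nablab tr\chi+(\text{l.o.t.})-\beta$, together with the analogous system for $\zeta$, whose source is a null curvature component. Combining the transport equation along $\bo L$ with elliptic estimates for the $2$-dimensional Hodge operators on $S_s$, the curvature flux $\mathcal{R}(\Hyp)$ supplies the $L^\infty_\omega$ control of $\int_0^1|\hat\chi|^2nads$, $\int_0^1|\zeta|^2nads$, etc., and the first-order norms $\nun{\hat\chi}$, $\nun{\zeta}$. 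The most delicate objects are $\nablab tr\chi$ and the mass aspect $\mu$, which require one more angular derivative of curvature than is directly available; for these I would follow Wang in propagating $\mu$ by its transport equation $\frac{d}{ds}\mu+tr\chi\,\mu=(\text{quadratic in Ricci coefficients and curvature})$ and trading the missing derivative through the Bianchi identities, so that only the undifferentiated $L^2$ flux of $\beta$ enters. This delivers the $\mathcal{P}_0$, the $L^2(\Hyp)$, and finally the weighted $\mathcal{B}_0$ bounds on $\nablab tr\chi$ and $\mu$.

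The main obstacle is precisely the very low regularity: with only an $L^2$ flux of curvature on $\Hyp$ one cannot afford a classical pointwise bootstrap, and the naive trace from $\Sigma_t$ to $S_s$ would lose half a derivative. The resolution, which is the technical core of \cite{qw3}, is the geometric Littlewood--Paley calculus of Klainerman--Rodnianski adapted to the foliation $\{S_s\}$, encoded in the spaces $\mathcal{P}$ and $\mathcal{B}$, together with sharp trace inequalities compatible with a merely bounded lapse $n$ (here replacing the CMC/asymptotically-flat structure). Controlling the background geometry of the foliation under these weak hypotheses — the comparison of $r$ with $s$, the near-roundness of the spheres $S_s$, and the stability of the generating flow of $\bo L$ — is what makes the argument long; once this infrastructure is in place, the coupled transport--Hodge estimates close the bootstrap and yield all seven displayed inequalities simultaneously.
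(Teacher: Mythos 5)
The paper does not prove this statement at all: it is imported verbatim from Q.~Wang's work \cite{qw3} (with the radius-of-injectivity input from \cite{kri}), so there is no in-paper argument to measure your proposal against. Judged on its own terms, your sketch correctly identifies the architecture of the cited proof --- a bootstrap/continuity argument anchored at the vertex where the cone is asymptotically Minkowskian, transport equations along $\bo L$ (Raychaudhuri for $tr\chi$, the transport for the null lapse $a$ and for the mass aspect $\mu$) coupled to two-dimensional Hodge systems on the leaves $S_t$, the renormalizations $tr\chi-\frac{2}{s}$ and $tr\chi-(an)^{-1}\overline{antr\chi}$, and the geometric Littlewood--Paley calculus and sharp trace estimates that make the argument close at $L^2$-flux regularity. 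That is the right roadmap.

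The genuine gap is that your sketch essentially reproduces the \emph{geodesic-foliation} argument of \cite{krc} and treats the hypothesis $\mathcal{F}(\pi)$ as decorative, whereas the whole point of \cite{qw3} --- and the reason this paper needs that version --- is that the cone is foliated by the level sets $S_t=\Sigma_t\cap\mathcal{H}$ of a \emph{general} time function, not by the affine parameter. In that setting every structure equation for $\chi$, $\zeta$, $a$, and $\mu$ acquires source terms built from the components of $^{\bo (T)}{\pi}$ (i.e.\ $k$ and $\nabla\log n$) restricted to $\mathcal{H}$, and these are controlled only through the deformation-tensor flux $\mathcal{F}(\pi)$; moreover the conclusion involves coefficients such as $\nu$ and $\underline{\zeta}$ that simply do not appear in the geodesic bootstrap and need their own transport--Hodge systems sourced by $\pi$. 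Relatedly, your Raychaudhuri comparison is written in the affine parameter $s$ while most of the stated estimates are in $t$ and $r$; the conversion factors $n$ and $a$ and the comparison $r\sim s\sim t$ must themselves be part of the bootstrap rather than assumed. Without addressing how $\mathcal{F}(\pi)$ enters and closes alongside the curvature flux, the proposal proves (at best) the older geodesic-foliation theorem, not the one quoted here.
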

{\it Remark :} We have here a slighlty better control on the flux of order one of $^{(\bo T)}{\pi}$ due to the fact that we have obtained a wave equation on $^{(\bo T)}{\pi}$ and not only on $k_{\alpha \beta}={\bo P}_{\alpha \nu}{\bo P}_{\beta \mu} {^{(\bo T)}{\pi}^{\mu \nu}}$ like in \cite{qw2} and \cite{qw3}.

The control of the geometry of the null cones require also to use the following theorem :

\begin{thm} {\bf [Control on the radius of injectivity of null hypersurfaces, Klainerman and Rodnianski, \cite{kri}]}

Let ${\bf (M,g)}$ be a smooth 3+1 Einstein vacuum spacetime foliated by $\Sigma_t$, the level hypersurfaces of a time function t with lapse function $n$. Suppose that $\Sigma_0$ satisfies ${\bf (G)}$. Consider an outgoing null hypersurface $\mathcal{H}= \cup_{0<t<1}S_t$ in ${\bo (M,g)}$ initiating from a point $p$, whose leaves are $S_t=\Sigma_t\cap\mathcal{H}$ and $t(p)=0$. Assume $C^{-1}<n<C$ on $\mathcal{H}$ for $C>0$, and $\norm{^{\bo (T)}{\pi}}_{L^1_tL^{\infty}_x}\leq D$. Suppose also that $\norm{{\bo R}}^2_{L^2(\Sigma_{0})}+ \norm{k}^4_{L^4(\Sigma_{0})}\leq E<\infty$, then there exists $i_0(C,D,E)>0$ s.t. for all $p \in \mathcal{D}^{+}(p)$ :

\begin{equation}
i^{-}(p,t)\geq min(i_0,d_{\bo g}(p,\Omega))
\end{equation}

\end{thm}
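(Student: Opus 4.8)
The plan is to bound from below, separately, the two obstructions to injectivity of the past null exponential map at $p$: the first conjugate (caustic) point along a generator, and the first meeting of two distinct generators (the null cut locus). Parametrise $\mathcal{N}^{-}(p)$ by $(s,\omega)$, where $s$ is the affine parameter of the generator $\bo{L}(\omega)$ normalised by $<\bo{L},\bo{T}>(p)=1$ and $\omega\in\mathbb{S}^2$; then $i^{-}(p)$ is the largest $s$ for which $(s,\omega)\mapsto\exp_p(s\bo{L}(\omega))$ is a diffeomorphism onto its image. Because the asserted lower bound is $\min(i_0,d_{\bo g}(p,\Omega))$, I only work on the truncated cone $\mathcal{N}^{-}(p)\cap\mathcal{J}^{+}(\Omega)$ and on affine length $\ls i_0$ to be fixed later. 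On such a truncation the reduced curvature flux $\mathcal{R}(\mathcal{H})+\mathcal{F}(\pi)$ is controlled by $E$ and $D$ through the reduced-flux estimate for $\mathcal{F}(\pi,p,\delta)$ and the curvature energy estimates stated above, and can be made smaller than any prescribed $\mathcal{R}_0$ by shrinking $i_0$; this is exactly the input needed to apply Q.~Wang's control of the null Ricci coefficients.

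For the conjugate radius I use the Raychaudhuri equation along each generator, $\bo{D}_{\bo{L}}(tr\chi)+\demi(tr\chi)^2=-\abs{\hat{\chi}}^2-\bo{R}(\bo{L},\bo{L})$, in which the vacuum condition kills the Ricci term, so $tr\chi$ can only fail to stay finite through the shear $\hat{\chi}$, which is itself flux-controlled via the null structure and Codazzi equations. Invoking the estimate $\norm{tr\chi-\frac{2}{s}}_{L^{\infty}(\mathcal{H})}\lesssim\mathcal{R}_0^2$ from Q.~Wang's theorem shows that $tr\chi$ stays comparable to $2/s$ on $(0,i_0]$, hence no generator becomes conjugate before affine time $i_0$. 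This step needs a short continuity/bootstrap argument so that the a priori control required to invoke Q.~Wang's theorem is recovered over the whole truncation.

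The main obstacle is ruling out that two distinct generators meet before affine time $i_0$ at a point that is not conjugate, i.e. controlling the genuine cut locus; this is the heart of \cite{kri}. I would argue by contradiction: if no bound $i_0(C,D,E)>0$ existed, there would be vertices along which $i^{-}(p)\to 0$; rescaling the metric so that the degenerating radius is normalised to one, and using that the curvature flux carried by a cone of vanishing affine length tends to zero (no concentration), one would obtain in the limit a cone of vanishing curvature, whose radius of injectivity is infinite---a contradiction. Making this rigorous is precisely the difficult part: it requires the sharp control of the null geodesic flow and of the $\{t=\text{cst}\}$ foliation through the transport equations for $\hat{\chi},\zeta,\mu$ recorded above, together with trace and bilinear/Strichartz-type estimates for the associated eikonal function that guarantee that the limiting objects are well defined and that two distinct generators cannot meet without a caustic.

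Finally, locality. Every ingredient---the flux bound, the transport equations for the null Ricci coefficients, and the analysis of the null geodesic flow---lives on $\mathcal{N}^{-}(p)\cap\mathcal{J}^{+}(\Omega)$, which is contained in $\mathcal{D}$. Hence the argument of \cite{kri} localises verbatim, and the only way the past cone can degenerate within the available range is by reaching the initial boundary $\Omega$. This yields $i^{-}(p,t)\geq\min(i_0,d_{\bo g}(p,\Omega))$ with $i_0=i_0(C,D,E)$, as claimed.
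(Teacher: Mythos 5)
First, a point of comparison: the paper does not prove this theorem at all --- it is imported verbatim from \cite{kri} as a black box (the only ``proof-like'' content in the paper is the remark that this result and Q.~Wang's theorem must be run together in a bootstrap, since each presupposes the truncated cone $\mathcal{N}^{-}(p,\delta)$ that the other provides). So there is no proof in the paper to match yours against; what can be judged is whether your sketch is a faithful and complete account of the argument of \cite{kri}. The first half is: the decomposition of $i^{-}(p)$ into a conjugacy radius and a null cut-locus radius is exactly the right structure, and controlling the conjugacy radius through the Raychaudhuri equation, the flux bound on $\hat{\chi}$, and the $L^\infty$ control of $tr\chi-\frac{2}{s}$ (with the continuity argument you mention to break the circularity with Wang's theorem) is the correct mechanism.

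The genuine gap is in your cut-locus step. The contradiction you propose --- rescale so the degenerating radius is $1$, observe that the curvature flux concentrates to zero, and conclude that the limit is ``a cone of vanishing curvature, whose radius of injectivity is infinite'' --- is false as stated: vanishing curvature does not prevent two distinct null generators from intersecting. Flat quotients (a thin torus cross time, say) have zero curvature and zero flux, yet arbitrarily small null cut locus, because the slices collapse. What rules this out is precisely the hypothesis you never use: that $\Sigma_0$ satisfies $\mathbf{(G)}$, which (propagated to the $\Sigma_t$ via the $L^1_tL^\infty$ bound on $^{\bo (T)}\pi$, as in Lemma 3.6.1 of the paper) gives a uniform lower bound on the volume radius of the leaves. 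In \cite{kri} the cut-locus bound is obtained by showing that an early intersection of two generators forces the intersection $\mathcal{N}^{-}(p)\cap\Sigma_t$ to enclose a region of anomalously small volume, contradicting the non-collapsing of the slices; the curvature flux enters only to control the deviation of the generators from their flat-space counterparts, not as the source of the contradiction. Without identifying $\mathbf{(G)}$ (equivalently the volume lower bound) as the decisive input, the heart of the theorem is not proved, and the constant $i_0$ cannot in fact depend only on $(C,D,E)$ --- it must also depend on the constant in $\mathbf{(G)}$, which the statement suppresses by fixing that constant in advance.
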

{\it Remark :} The reader will note that the two theorems come together : some control on the null Ricci coefficients are needed to control the radius of injectivity, in particular $tr\chi$ (see \cite{kri}). From another hand, the bootstrap argument of \cite{krc}, \cite{dp} or \cite{qw3} requires to work on a smooth part of a past null cone that is on a $\mathcal{N}^{-}(p,\delta)$ which requires some uniform control on the radius of injectivity of these null cones.

\vspace{3 mm}
Finally if the null cones satisfy uniformly the control on the radius of conjugacy and on the null Ricci coefficients, one can obtain using a parametrix of Kirchoff-Sobolev type (see \cite{krk}) following the orginal strategy that ${\bf R}$, ${\bf D}{\bf R}$ and ${\bf D^2}{\bf R}$ are bounded in $L^2(\Omega_s)$ by a constant depending only on the initial data, the $\Delta_i$ and $t_0$.

\vspace{3 mm}

Thus the proof of the energy estimates follows the original strategy of \cite{krb } to derive uniform $L^2$ control on ${\bf R}$ and its covariant derivatives with two slight modifications. From the one hand, we use purely hyperbolic estimates to control energy estimates on $^{(\bo T)} {\pi}$ from the energy estimates on ${\bf R}$ using the structure of the reduced system in a time harmonic transversal gauge and from the other hand we use local energy estimates instead of estimates in a strip.

\subsection{Proof of the theorem :}

\begin{proof}  Consider a point $p \in \Omega'$, denote by $p_t=F_{t}(p)$. Using the equivalence of the uniform ellipticity of the metrics $g_t$ given by energy lemma  we can find $r_p>0$ and $C_p>0$ such that the volume radius of $p_t$ at scales $ \leq r_p$ on $\Omega'_t$  is bounded by below by $C_p$ independently of $t$. Using the energy lemma as well as the geometric lemma, we can apply the local in time existence theorem to $(\Omega_{t},g_{t},k_{t})$ and points $p_t$ to deduce that there exists a Cauchy development of $(\Omega_{t},g_{t},k_{t})$ for proper time $T(p)>0$ (not depending of t) to the future of $p_t \in \Omega_{t}$ . From another hand, the conditions on the theorem imply that the maximal proper time between the slices $\Omega_i$ and $\Omega_j$ for $0\leq t_i < t_j<t_0$  denoted by $T_i^j$ satisfies :
\begin{equation}
 T_i^j \ls \Delta_1(t_j-t_i)
\end{equation}
Indeed using :
\begin{equation}
{\bf g}= -n^2 dt^2 +g_{ij}dx^i dx^j
\end{equation}
 a uniform bound on the lapse implies directly (49).

We have as a consequence that for $t$ large enough the maximal Cauchy development of $(\Omega_{t},g_{t},k_{t})$  extends as a globally hyperbolic spacetime to the future of any $F^t(\Omega')$ , $0 \leq t<t' < t_0$ in the neighborhood of $p_t$, which concludes the proof. 
\end{proof}

{\bf Acknowledgement :} I would like to thank Professor Planchon for useful discussions. Professor Klainerman suggested the subject to me and the interest of using a time-harmonic gauge to localize the criterion. I am very grateful to him for the discussions we had on the subject and the time he spent in discussions with me during the months i spent in Princeton.

\bibliographystyle{plain}
\bibliography{local_submitted3}

\end{document}